\providecommand{\U}[1]{\protect\rule{.1in}{.1in}}
\newtheorem{theorem}{Theorem}
\newtheorem{corollary}{Corollary}
\newtheorem{definition}{Definition}
\newtheorem{lemma}{Lemma}
\newtheorem{proposition}{Proposition}
\newtheorem{remark}{Remark}
\def\Tr{\operatorname{Tr}}
\def\supp{\operatorname{supp}}
\def\>{\rangle}
\def\<{\langle}
\def\({\left(}
\def\){\right)}
\def\[{\left[}
\def\]{\right]}
\def\V{\Vert}
\def\id{\operatorname{id}}
\newcommand{\mc}[1]{\mathcal{#1}}
\newcommand{\wt}[1]{\widetilde{#1}}
\numberwithin{equation}{section}
\begin{document}
\title{Quantum reading capacity:\\General definition and bounds}
\author{Siddhartha Das and
Mark M. Wilde
\thanks{Siddhartha Das was affiliated with the Hearne Institute for Theoretical Physics, Department of Physics and Astronomy,
Louisiana State University, Baton Rouge, Louisiana 70803, USA when this research was conducted and completed. He is now with the Centre for Quantum Information \& Communication (QuIC), \'Ecole polytechnique de Bruxelles, Universit\'e libre 
de Bruxelles, Brussels, B-1050, Belgium. Mark M.~Wilde is affiliated with the Hearne Institute for Theoretical Physics, Department of Physics and Astronomy, Center for Computation and Technology,
Louisiana State University, Baton Rouge, Louisiana 70803,  USA. Emails: sdas21@lsu.edu, mwilde@lsu.edu. This paper was presented in part at the Beyond i.i.d. in Information Theory Conference in Singapore during July 2017.}}


\date{\today}
\maketitle

\begin{abstract}
Quantum reading refers to the task of reading out classical information stored in a read-only memory device. In any such protocol, the transmitter and receiver are in the same physical location, and the goal of such a protocol is to use these devices (modeled by independent quantum channels), coupled with a quantum strategy, to read out as much information as possible from a memory device, such as a CD or DVD. As a consequence of the physical setup of quantum reading, the most natural and general definition for quantum reading capacity should allow for an adaptive operation after each call to the channel, and this is how we define quantum reading capacity in this paper. We also establish several bounds on quantum reading capacity, and we introduce an environment-parametrized memory cell with associated environment states, delivering second-order and strong converse bounds for its quantum reading capacity. We calculate the quantum reading capacities for some exemplary memory cells, including a thermal memory cell, a qudit erasure memory cell, and a qudit depolarizing memory cell. We finally provide an explicit example to illustrate the advantage of using an adaptive strategy in the context of zero-error quantum reading capacity.
  
\end{abstract}
\begin{IEEEkeywords}
quantum reading, channel discrimination, adaptive strategy, quantum strategy, memory devices
\end{IEEEkeywords}

\section{Introduction}
One of the primary goals of quantum information theory is to identify limitations on information processing when constrained by the laws of quantum mechanics. In general, quantum  information theory uses tools that are universally applicable to the processing of arbitrary quantum systems, which include quantum optical systems, superconducting systems, trapped ions, etc.~\cite{NC00}. The abstract approach to quantum information allows us to explore how to use the principles of quantum mechanics  for communication or computation tasks, some of which would not be possible without quantum mechanics. 

In \cite{BRV00}, a communication protocol was introduced in which a classical message is encoded in a set of unitary operations, and later on, one can read out the information stored in the unitary operations by calling them. Over a decade after \cite{BRV00} was published, this communication model was generalized and studied under the name \textquotedblleft quantum reading\textquotedblright\ in \cite{Pir11}, and it was  applied to the setting of an optical read-only memory. An optical read-only memory is one of the prototypical examples of quantum reading, and for this reason, quantum reading has been mainly considered in the context of optical realizations like CD-ROMs and DVDs (see \cite{LP16} and references therein). In this case, classical bits are encoded in the reflectivity and phase of memory cells, which can be modeled as a collection of pure-loss bosonic channels. More generally and abstractly, a memory cell is a collection of quantum channels, from which an encoder can select to form codewords for the encoding of a classical message (see Section~\ref{sec:notation} for a formal definition). Each quantum channel in a codeword, representing one part of the stored information, is read  only once.  In subsequent works \cite{PLG+11,LP16}, the model was extended to a memory cell consisting of arbitrary quantum channels.  In a quantum reading strategy, one exploits entangled states and collective measurements to help read out a classical message  stored in a read-only memory device. In many cases, one can achieve performance better than what can be achieved when using a classical strategy \cite{Pir11}. 

Some early developments in quantum reading  were based on a direct application of developments in quantum channel discrimination \cite{Kit97,DPP01,Aci01,WY06,TEGGLMPS08} (see also \cite{DFY09,CMW14,HHLW10,DGLL16}). 
However, the past few years have seen some progress in quantum reading: there have been developments in defining protocols for quantum reading (including limited definitions of reading capacity and zero-error reading capacity), giving upper bounds on the rates for classical information readout,  achievable rates for memory cells consisting of a particular class of bosonic channels, and details of a quantum measurement that can achieve non-trivial rates for memory cells consisting of a certain class of bosonic channels \cite{Pir11,PLG+11,GDN+11,GW12,WGTL12,GS13,LP16}.  Most recently, a task for secure reading of a memory device against a passive eavesdropper, called \textit{private reading}, has been introduced in \cite{DBW17}. The information-theoretic study of quantum reading is based on considerations coming from quantum Shannon theory \cite{W15book}, and the most abstract and general way to define the encoding of a classical message in a quantum reading protocol is as mentioned above, a sequence of quantum channels chosen from a given memory cell.

Hitherto, all prior works on quantum reading considered decoding protocols of the following form: A reader possessing a transmitter system entangled with an idler system sends the transmitter system through the coded sequence of quantum channels. Finally, the reader decodes the message by performing a collective measurement on the joint state of the output system and the idler system. 

However, the above approach neglects an important consideration: \textit{in a quantum reading protocol, the transmitter and receiver are in the same physical location}. We can thus refer to both devices as a single device called a transceiver. As a consequence of this physical setup, the most  general  and natural definition for quantum reading capacity should allow for the transceiver to perform an adaptive operation after each call to the memory, and this is how we define quantum reading capacity in this paper (see Section~\ref{sec:q-r-p}, as well as Figure~\ref{fig:reading-protocol} for a depiction of our modified definition of a quantum reading protocol).

In general, an adaptive strategy can have a significant advantage over a non-adaptive strategy in the context of quantum channel discrimination \cite{HHLW10}. Furthermore, a quantum channel discrimination protocol employing a non-adaptive strategy is a special case of one that uses an adaptive strategy. Since quantum reading bears close connections to quantum channel discrimination, we should suspect that adaptive operations could help to increase  quantum reading capacity in some cases, and this is one contribution of the present paper.

 We stress that the physical setup of quantum reading is rather different from that considered in a typical communication problem (see also \cite{DBW17} for a detailed discussion), in which the sender and receiver are in different physical locations. In this latter case, allowing for adaptive operations represents a different physical model and  is thus considered as a different kind of capacity, typically called a feedback-assisted capacity. However, as advocated above, the physical setup of quantum reading necessitates that there should be no such distinction between capacities: the quantum reading capacity should be defined as it is here, in such a way as to allow for adaptive operations.

Another point of concern with prior work on quantum reading is as follows: so far, all bounds on the quantum reading rate have been derived in the usual setting of quantum Shannon theory, in which the number of uses of the channels tends to infinity (also called the i.i.d.~setting, where i.i.d.~stands for ``independent and identically distributed'').
However, it is important for practical purposes to determine rates for quantum reading in the non-asymptotic scenario, i.e., for a finite number of quantum channel uses  and a given error probability for decoding. The information-theoretic analysis in the non-asymptotic case is motivated by the fact that in practical scenarios, we have only finite resources at our disposal \cite{RennerThesis,DR09,T15book}.

In this paper, we address some of the concerns mentioned above by giving the most general and natural definition for a quantum reading protocol and quantum reading capacity. We also establish bounds on the rates of quantum reading for wider classes of memory cells in both the asymptotic and non-asymptotic cases. First, we define a quantum reading protocol and quantum reading capacity in the most general setting possible by allowing for adaptive strategies. We give weak-converse, single-letter bounds on the rates of quantum reading protocols that employ either adaptive or non-adaptive strategies for arbitrary memory cells. We also introduce a particular kind of memory cell, which we call an
\textit{environment-parametrized memory cell} with associated environment states (see Section~\ref{sec:channel-symmetry} for definitions), for which stronger statements can be made for the rates and capacities in the non-asymptotic situation of a finite number of uses of the channels. We note that a particular kind of environment-parametrized memory cell consists of a collection of channels that are jointly teleportation simulable with associated resource states (see Definition~\ref{def:tel-cell} and \cite{BDSW96,HHH99} for  teleportation simulation, as well as \cite{NFC09,Mul12}). Many channels of interest obey these symmetries: some examples are erasure,  dephasing, thermal, noisy amplifier, and Pauli channels \cite{BDSW96,DP05,JWD+08,Mul12,PLOB15,WTB16,TW16}. Here we determine strong converse and  second-order bounds on the quantum reading capacities of environment-parametrized memory cells. Note that a strong converse rate~$R$ is such that the success probability of a sequence of protocols tends to zero with the number of channel uses if the actual rate of the protocols in the sequence exceeds the rate~$R$. Based on an example from \cite[Section 3]{HHLW10}, we show in  Section~\ref{sec:zero-error} that there exists a memory cell for which its zero-error reading capacity with adaptive operations is at least $\tfrac12$, but its zero-error reading capacity without adaptive operations is equal to zero. This example emphasizes how reading capacity should be defined in such a way as to allow for adaptive operations, as stressed in our paper.

The organization of our paper is as follows. In the next section, we begin by introducing standard notation, definitions, and necessary lemmas. We introduce two of the aforementioned classes of memory cells in Section~\ref{sec:channel-symmetry}. In Section~\ref{sec:q-r-p}, we give the most general and natural definition of a quantum reading protocol and quantum reading capacity. Section~\ref{sec:converse bounds} contains our main results, which were briefly summarized in the previous paragraph. In Section~\ref{sec:example}, we calculate quantum reading capacities for a thermal memory cell and for a class of jointly covariant memory cells, including a qudit erasure memory cell and a qudit depolarizing memory cell. In Section~\ref{sec:zero-error}, we provide an example to illustrate the advantage of adaptive operations over non-adaptive operations in the context of zero-error quantum reading capacity.  In the final section of the paper, we conclude and shed some light on possible future work.

\section{Preliminaries}

\label{sec:notation}

We begin by summarizing some of the standard notation, definitions, and lemmas that are used in the subsequent sections of the paper. 
 
 \subsection{Quantum states, measurements, channels, and memory cells, etc.}
 
Let $\mc{L}(\mc{H})$ denote the 
algebra of bounded linear operators acting on a Hilbert space $\mc{H}$. The
subset of $\mc{L}(\mc{H})$ 
containing all positive semi-definite operators is denoted by $\mc{L}_+(\mc{H})$. We denote the identity operator as $I$ and the identity superoperator as $\id$. The Hilbert space 
of a quantum system~$B$ is denoted by $\mc{H}_B$.
The state of a quantum system $B$ is represented by a density operator~$\rho_B$, which is a positive semi-definite operator with unit trace.
Let $\mc{D}(\mc{H}_B)$ denote the set of all elements $\rho_B\in \mc{L}_+(\mc{H}_B)$ such that $\Tr\{\rho_B\}=1$. The Hilbert space for a joint system $RB$ is denoted as $\mc{H}_{RB}$ where $\mc{H}_{RB}=\mc{H}_R\otimes\mc{H}_B$. The density operator of a joint system $RB$ is defined as $\rho_{RB}\in \mc{D}(\mc{H}_{RB})$, and the partial trace over $B$ gives the reduced density operator for system $R$, i.e., $\Tr_B\{\rho_{RB}\}=\rho_R$ such that $\rho_R\in \mc{D}(\mc{H}_R)$. The notation $B^n:= B_1B_2\cdots B_n$ denotes a joint system consisting of $n$~subsystems, each of which is isomorphic to Hilbert space $\mc{H}_B$. A purification of a density operator $\rho_B\in \mc{D}\(\mc{H}_B\)$ is a pure state $\psi^\rho_{EB}\in \mc{D}\(\mc{H}_{EB}\)$
such that $\Tr_E\{\psi^\rho_{EB}\}=\rho_B$, where $E$ is called the purifying system. The state $\Phi_{RB}\in \mc{D}\(\mc{H}_{RB}\)$ denotes a bipartite, maximally entangled state, and $\pi_B\in\mc{D}\(\mc{H}_B\)$ denotes the maximally mixed state. 

The evolution of a quantum state is described by a quantum channel. A quantum channel $\mc{N}_{B'\to B}$ is a completely positive, trace-preserving (CPTP) map $\mc{N}:\mc{L}_+(\mc{H}_{B'})\to \mc{L}_+(\mc{H}_B)$. The Choi state $\omega_{RB}$ of a quantum channel $\mc{N}_{B'\to B}$ is defined as 
\begin{equation}
\omega_{RB}:=(\id_{R}\otimes\mc{N}_{B'\to B})(\Phi_{RB'}).
\end{equation} 
A \textit{memory cell} $\{\mc{N}^x\}_{x\in\mc{X}}$ is defined as a set of quantum channels; i.e., $\mc{N}^{x}:\mc{L}_+(\mc{H}_{B'})\to\mc{L}_+(\mc{H}_{B})$ is a quantum channel  for all $x\in\mc{X}$, where $\mc{X}$ is an alphabet. 

Let $U^\mc{N}_{B'\to BE}$ denote an isometric extension of a quantum channel $\mc{N}_{B'\to B}$, which by definition means that
\begin{equation}
\Tr_E\!\left\{U^\mc{N}_{B'\to BE}\rho_{B'}\left(U^\mc{N}_{B'\to BE}\right)^\dagger\right\}=\mathcal{N}_{B'\to B}(\rho_{B'}),  
\end{equation}
for all $\rho_{B'}\in \mc{D}(\mc{H}_{B'})$,
along with the following conditions
for $U_\mc{N}$ to be 
an isometry: 
\begin{equation}
U^\dagger_\mc{N}U_\mc{N}=I_{B'}, \ \text{and} \ U_\mc{N}U^\dagger_\mc{N}=\Pi_{BE},
\end{equation}
where $\Pi_{BE}$ is a projection onto a subspace of the Hilbert space $\mc{H}_{BE}$. A positive operator-valued measure (POVM) is a collection of positive semi-definite operators $\{\Lambda^x\}_{x\in\mc{X}}$ such that $\sum_{x\in\mc{X}}\Lambda^x=I$. 

The cumulative distribution function corresponding to the standard normal random variable is defined as
\begin{equation}
\Phi(a):=\int_{-\infty}^a\frac{1}{\sqrt{2\pi}}\exp\!\(-\frac{1}{2}x^2\)\ dx.
\end{equation}
Its inverse is also useful for us and is defined as $\Phi^{-1}(a):=\sup\left\{a\in\mathbb{R}|\Phi(a)\leq \varepsilon\right\}$, which reduces to the usual inverse for $\varepsilon\in(0,1)$.

Throughout we denote probability distributions of any random variables like $X$ and $Y$ by $p_X(x)$ and $p_Y(y)$, respectively.

 \subsection{Entropies and generalized divergences}

The quantum entropy of a density operator $\rho_{B}$ is defined as 
\begin{equation}
H(B)_\rho:= H(\rho_B)= -\Tr[\rho_B\log_2\rho_B].
\end{equation}
The conditional quantum entropy $H({B'}\vert B)_\rho$ of a density operator $\rho_{{B'}B}$ of a joint system ${B'}B$ is defined as
\begin{equation}
H({B'}\vert B)_\rho := H({B'}B)_\rho-H(B)_\rho.
\end{equation}
The quantum relative entropy of two quantum states is a measure of their distinguishability. For $\rho\in\mc{D}(\mc{H})$ and $\sigma\in\mc{L}_+(\mc{H})$, it is defined as~\cite{Ume62} 
\begin{equation}
D(\rho\V \sigma):= \left\{ 
\begin{tabular}{c c}
$\Tr\{\rho[\log_2\rho-\log_2\sigma]\}$, & $\supp(\rho)\subseteq\supp(\sigma)$\\
$+\infty$, &  otherwise.
\end{tabular} 
\right.
\end{equation}
The quantum relative entropy is non-increasing under the action of positive trace-preserving maps \cite{MR15}, which is the statement that $D(\rho\V\sigma)\geq D(\mc{N}(\rho)\V\mc{N}{(\sigma)})$ for any two density operators $\rho$ and $\sigma$ and a positive trace-preserving map $\mc{N}$ (this inequality applies to quantum channels as well \cite{lindblad75}, since every completely positive map is also a positive map by definition).
The relative entropy variance $V(\rho\Vert\sigma)$ of density operators $\rho$ and $\sigma$ is defined as \cite{TH12,li12}
\begin{equation}
V(\rho\Vert\sigma):=\Tr\{\rho\[\log_2\rho-\log_2\sigma-D(\rho\Vert\sigma)\]^2\}.
\end{equation}

The quantum mutual information $I(R;B)_\rho$ is a measure of correlations between quantum systems $R$ and $B$ in a state $\rho_{RB}$. It is defined as
\begin{align}
I(R;B)_\rho &:=\inf_{\sigma_B\in\mc{D}(\mc{H}_B)}D(\rho_{RB}\Vert\rho_R\otimes\sigma_B)\\
& =H(R)_\rho+H(B)_\rho-H(RB)_\rho.
\end{align}
The quantum conditional mutual information $I(R;B\vert C)_\rho$ of a tripartite density operator $\rho_{RBC}$ is defined as
\begin{align}
I(R;B\vert C)_\rho &:=H(R\vert C)_\rho+H(B\vert C)_\rho-H(RB\vert C)_\rho.
\end{align}
It is known that quantum entropy, quantum mutual information, and conditional quantum mutual information are all non-negative quantities.

A quantity is called a generalized divergence \cite{PV10,SW12} if it satisfies the following monotonicity (data-processing) inequality for all density operators $\rho$ and $\sigma$ and quantum channels $\mc{N}$:
\begin{equation}\label{eq:gen-div-mono}
\mathbf{D}(\rho\Vert \sigma)\geq \mathbf{D}(\mathcal{N}(\rho)\Vert \mc{N}(\sigma)).
\end{equation}
As a direct consequence of the above inequality, any generalized divergence satisfies the following two properties for an isometry $U$ and a state~$\tau$ \cite{WWY14}:
\begin{align}
\mathbf{D}(\rho\Vert \sigma) & = \mathbf{D}(U\rho U^\dag\Vert U \sigma U^\dag),\label{eq:gen-div-unitary}\\
\mathbf{D}(\rho\Vert \sigma) & = \mathbf{D}(\rho \otimes \tau \Vert \sigma \otimes \tau).\label{eq:gen-div-prod}
\end{align}
One can define a mutual-information-like quantity for any quantum state $\rho_{RB}$ as
\begin{equation}
I_{\mathbf{D}}(R;B)_\rho :=\inf_{\sigma_B\in\mc{D}(\mc{H}_B)}\mathbf{D}(\rho_{RB}\Vert \rho_R\otimes\sigma_B).
\end{equation}

The trace distance between two density operators $\rho,\sigma\in\mc{D}(\mc{H})$ is equal to $\Vert \rho-\sigma\Vert_1$, where $\Vert T\Vert_1=\Tr\{\sqrt{T^\dag T}\}$. The fidelity between two states $\rho,\sigma\in\mc{D}(\mc{H})$ is defined as $F(\rho,\sigma):=\Vert \sqrt{\rho}\sqrt{\sigma}\Vert_1^2$ \cite{U76}. The trace distance and the negative logarithm of the fidelity are particular examples of generalized divergences.

The sandwiched R\'enyi relative entropy  \cite{MDSFT13,WWY14} is denoted as $\wt{D}_\alpha(\rho\V\sigma)$  and defined for
$\rho\in\mc{D}(\mc{H})$, $\sigma\in\mc{L}_+(\mc{H})$, and  $\forall \alpha\in (0,1)\cup(1,\infty)$ as
\begin{equation}\label{eq:def_sre}
\wt{D}_\alpha(\rho\V \sigma):= \frac{1}{\alpha-1}\log_2 \Tr\left\{\left(\sigma^{\frac{1-\alpha}{2\alpha}}\rho\sigma^{\frac{1-\alpha}{2\alpha}}\right)^\alpha \right\} ,
\end{equation}
but it is set to $\wt{D}_\alpha(\rho\Vert\sigma)=+\infty$ for $\alpha\in(1,\infty)$ if $\supp(\rho)\nsubseteq \supp(\sigma)$.
The sandwiched R\'enyi relative entropy obeys the following ``monotonicity in $\alpha$'' inequality \cite{MDSFT13}:
\begin{equation}\label{eq:mono_sre}
\wt{D}_\alpha(\rho\V\sigma)\leq \wt{D}_\beta(\rho\V\sigma) \end{equation}
 if  $ \alpha\leq \beta$,  for $ \alpha,\beta\in(0,1)\cup(1,\infty)$.
The following lemma states that the sandwiched R\'enyi relative entropy $\wt{D}_\alpha(\rho\V\sigma)$ is a particular generalized divergence for certain values of $\alpha$. 
\begin{lemma}[\cite{FL13,Bei13}]
Let $\mc{N}_{B'\to B}$ be a quantum channel   and let $\rho_{B'}\in\mc{D}(\mc{H}_{B'})$ and $\sigma_{B'}\in \mc{L}_+(\mc{H}_{B'})$. Then, for all $\alpha\in \[1/2,1\)\cup (1,\infty)$, the following inequality holds
\begin{equation}
\wt{D}_\alpha(\rho\V\sigma)\geq \wt{D}_\alpha(\mc{N}(\rho)\V\mc{N}(\sigma)) .
\end{equation}
\end{lemma}

In the limit $\alpha\to 1$, the sandwiched R\'enyi relative entropy $\wt{D}_\alpha(\rho\V\sigma)$ converges to the quantum relative entropy \cite{MDSFT13,WWY14}
\begin{equation}\label{eq:mono_renyi}
\lim_{\alpha\to 1}\wt{D}_\alpha(\rho\V\sigma):= D_1(\rho\V\sigma)=D(\rho\V\sigma).
\end{equation} 
The sandwiched  R\'enyi mutual information $\wt{I}_\alpha(R;B)_\rho$ is defined as \cite{Bei13,GW13}
\begin{equation}
\wt{I}_\alpha(R;B)_\rho:=\min_{\sigma_B}\wt{D}_\alpha(\rho_{RB}\V\rho_R\otimes\sigma_B).
\end{equation}

Another generalized divergence we make use of is the $\varepsilon$-hypothesis-testing divergence \cite{BD10,WR12},  defined as
\begin{multline}
D^\varepsilon_h\!\(\rho\Vert\sigma\):= \\
-\log_2\inf_{\Lambda}\{\Tr\{\Lambda\sigma\}:\ 0\leq\Lambda\leq I \wedge\Tr\{\Lambda\rho\}\geq 1-\varepsilon\},
\label{eq:hypo-test-div}
\end{multline}
for $\varepsilon\in[0,1]$ and $\rho,\sigma\in\mc{D}(\mc{H})$.

\subsection{Local operations and classical communication (LOCC)}\label{sec:LOCC}

A round of LOCC (or LOCC channel) between two spatially separated parties Alice $A$ and Bob $B$ consists of an arbitrarily large, yet finite number of compositions of the following \cite{BDSW96,CLM+14}:
\begin{enumerate}
\item Alice performs a quantum instrument \cite{W15book} on her system $A$. She forwards the classical output~$x$ to Bob, who then performs a quantum channel on system $B$ conditioned on the classical output $x$. This sequence of actions realizes the following quantum channel:
\begin{equation}
\sum_x\mc{E}^x_A\otimes\mc{F}^x_B,
\end{equation}  
where $\{\mc{E}^x_A\}_x$ is a collection of completely positive, trace non-increasing maps such that $\sum_x\mc{E}^x_A$ is a quantum channel and $\{\mc{F}^x_B\}_x$ is a collection of quantum channels.
\item The situation is reversed, with Bob performing a quantum instrument and forwarding the classical output $y$ to Alice. Alice performs a quantum channel conditioned on the classical output~$y$. This sequence of actions realizes the following quantum channel: 
\begin{equation}
\sum_y\mc{E}^y_B\otimes\mc{F}^y_A. 
\end{equation} 
\end{enumerate}

\subsection{Channels with symmetry}

\label{sec:symmetry}
Consider a finite group $G$. For every $g\in G$, let $g\to U_{B'}(g)$ and $g\to V_B(g)$ be projective unitary representations of $g$ acting on the input space $\mc{H}_{B'}$ and the output space $\mc{H}_B$ of a quantum channel $\mc{N}_{B'\to B}$, respectively. A quantum channel $\mc{N}_{B'\to B}$ is covariant with respect to these representations if the following relation is satisfied \cite{Hol02,Hol06,H13book}:
\begin{equation}\label{eq:cov-condition}
\mc{N}_{B'\to B}\!\(U_{B'}(g)\rho_{B'} U_{B'}^\dagger(g)\)=V_B(g)\mc{N}_{B'\to B}(\rho_{B'}) V_B^\dagger(g), 
\end{equation}
for all $\rho_{B'}\in\mc{D}(\mc{H}_{B'})$ and for all $g\in G$.
For an isometric extension of the above channel $\mc{N}$, there exists a unitary representation $W_E(g)$ acting on the environment Hilbert space $\mc{H}_E$ \cite{Hol06}, such that
for all $g\in G$
\begin{multline}\label{eq:iso-covariant}
\mc{U}^\mc{N}_{B'\to BE}\!\({U_{B'}(g)\rho_{B'}U^\dagger_{B'}(g)}\)=
\\
\(V_B(g)\otimes W_E(g)\)\left[\mc{U}^\mc{N}_{B'\to BE}\(\rho_{B'}\)\right]\(V^\dagger_B(g)\otimes W^\dagger_E(g)\).
\end{multline}
A simple proof of this statement is available in \cite[Appendix~A]{DBW17}. 

In our paper, we define covariant channels in the following way:
\begin{definition}[Covariant channel]\label{def:covariant}
A quantum channel is covariant if it is covariant with respect to a group $G$ which has a representation $U(g)$, for all $g\in G$, on $\mc{H}_{B'}$ that is a unitary one-design; i.e., the map  $\frac{1}{|G|}\sum_{g\in G}U(g)(\cdot)U^\dagger(g)$ always outputs the maximally mixed state for all input states. 
\end{definition}

\begin{definition}[Teleportation-simulable channel \cite{BDSW96,HHH99,Mul12}]\label{def:tel-sim}
A channel $\mc{N}_{B'\to B}$ is teleportation-simulable with associated resource state $\omega_{RB}\in\mc{D}\(\mc{H}_{RB}\)$ if there exists an LOCC channel $\mc{L}_{RB'B\to B}$    such that
\cite[Eq.~(11)]{HHH99}
\begin{equation}
\mc{N}_{B'\to B}\(\rho_{B'}\)=\mc{L}_{RB' B\to B}\(\rho_{B'}\otimes\omega_{RB}\),
\end{equation}
for all $\rho_{B'}\in\mc{D}\(\mc{H}_{B'}\)$,
 and the bipartite cut of the LOCC channel is $RB'|B$. 
 A particular example of an LOCC channel could be  a generalized teleportation protocol \cite{Wer01}.
\end{definition}

The following lemma from \cite[Section~7]{CDP09} extends the developments in \cite{Wer01,Wolf12,LM15}. See also \cite[Appendix~A]{WTB16} for a short proof of the following lemma.

\begin{lemma}[\cite{CDP09}]\label{thm:cov-tel-sim-channel}
All covariant channels (Definition~\ref{def:covariant}) are teleportation-simulable with respect to the resource state $\mathcal{N}_{B'\to B}(\Phi_{RB'})$.
\end{lemma}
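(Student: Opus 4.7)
The plan is to build an explicit one-way LOCC protocol that generalizes quantum teleportation: Alice measures her input system together with her share $R$ of the Choi state $\omega_{RB}=\mathcal{N}_{A\to B}(\Phi_{RA})$, broadcasts the classical outcome to Bob, and Bob applies a correction unitary dictated by the covariance of $\mathcal{N}$.

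First I would define, for each $g \in G$, the operator
\begin{equation*}
M^{AR}_g := \frac{d_A^2}{|G|}\,(U_A(g)\otimes I_R)\,|\Phi\rangle\!\langle\Phi|_{AR}\,(U_A^\dagger(g)\otimes I_R),
\end{equation*}
and use the unitary one-design property of Definition~\ref{def:covariant} to verify that $\sum_{g \in G} M^{AR}_g = I_{AR}$, so that $\{M^{AR}_g\}_{g\in G}$ is a valid POVM on $AR$. The candidate LOCC channel $\mathcal{L}_{ARB\to B}$ is then: Alice performs this measurement on $AR$, sends the outcome $g$ to Bob, and Bob applies $V_B(g)(\cdot)V_B^\dagger(g)$ on $B$.

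Second, I would compute the unnormalized post-measurement state on $B$ conditioned on outcome $g$. Cyclicity of the trace absorbs $U_A(g)$ into $\rho_A$, converting the input-side factor into $U_A^\dagger(g)\rho_A U_A(g)$; then expanding the Choi state as $\tfrac{1}{d_A}\sum_{k,l}|k\rangle\!\langle l|_R\otimes \mathcal{N}(|k\rangle\!\langle l|)_B$ and carrying out the traces over $A$ and $R$ against $|\Phi\rangle\!\langle\Phi|_{AR}$ collapses the double sum and yields
\begin{equation*}
\tfrac{1}{|G|}\,\mathcal{N}\!\left(U_A^\dagger(g)\rho_A U_A(g)\right).
\end{equation*}
Now the covariance relation~\eqref{eq:cov-condition}, applied at $g^{-1}$ and using that $V_B(g^{-1})=V_B^\dagger(g)$ up to a global phase (which drops out under conjugation, since $V$ is a projective unitary representation), rewrites this as $\tfrac{1}{|G|}V_B^\dagger(g)\mathcal{N}(\rho_A)V_B(g)$. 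Bob's correction then produces $\tfrac{1}{|G|}\mathcal{N}(\rho_A)$ in every branch, and summing over $g$ gives $\mathcal{L}_{ARB\to B}(\rho_A \otimes \omega_{RB})=\mathcal{N}(\rho_A)$, which is the claimed teleportation simulation.

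The main obstacle is a bookkeeping one: the ``ricochet'' identity $(M_A\otimes I_R)|\Phi\rangle_{AR}=(I_A\otimes M_R^T)|\Phi\rangle_{AR}$ introduces a transpose that, with the wrong convention for the measurement basis, would leave an unwanted transpose inside $\mathcal{N}$ in the calculation above. The choice of POVM written out here arranges for the two implicit ricochet steps (one from $|\Phi\rangle_{AR}$ and one from the copy of $|\Phi\rangle_{RA'}$ that defines $\omega_{RB}$) to cancel, so that covariance in exactly the form \eqref{eq:cov-condition} is what is needed---no transpose-covariance variant is required.
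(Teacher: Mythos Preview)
Your proposal is correct and is exactly the argument the paper has in mind: the paper does not supply an in-text proof of Lemma~\ref{thm:cov-tel-sim-channel} but defers to \cite[Appendix~A]{WTB16}, whose strategy---a group-indexed Bell-type POVM $\{E^g_{A'R}\}_{g\in G}$ on the input together with the Choi state, followed by the covariance correction $V_B(g)$---is precisely what you reconstruct (cf.\ the paper's remark after Definition~\ref{def:tel-cell} pointing to \cite[Eq.~(A.4)]{WTB16}). Your handling of the transpose/ricochet bookkeeping and of the projective phases in $U(g^{-1})\propto U(g)^\dagger$ is also fine.
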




\section{Environment-parametrized memory cells}\label{sec:channel-symmetry}

We now introduce a broad class of memory cells that we call environment-parametrized memory cells with associated environment states, and we discuss two classes of memory cells that are particular kinds of environment-parametrized memory cells. Consider an alphabet $\mc{X} := \{ 1, \ldots, |\mc{X}|\}$, where $|\mc{X}|$ is some positive integer. 

\begin{definition}[Environment-parametrized memory cell]\label{def:env-cell}
A memory cell $\mc{E}_{\mc{X}}=\{\mc{E}_{{B'}\to B}^x\}_{x\in\mc{X}}$ is  environment-parametrized with associated environment states $\{\theta^x_{E}\}_{x\in\mc{X}}$ if there exists  a fixed interaction channel
$\mc{F}_{{B'}E\to B}$
such that for all input states $\rho_{B'}$ and
$\forall x\in\mc{X}$
\begin{equation}\label{eq:env-cell}
\mc{E}^x_{{B'}\to B}(\rho_{B'}) =\mc{F}_{{B'}E\to B}(\rho_{B'}\otimes\theta^x_E).
\end{equation}
\end{definition}

This notion is related to the notion of programmable channels, used in the context of quantum computation \cite{DP05}. We should clarify that  \textit{any} memory cell
$\{\mathcal{E}^x_{B' \to B}\}_{x \in \mathcal{X}}$ is environment-parametrized in a trivial way, i.e., with trivial associated environment states. To see this, one can set
\begin{align}
\theta^x_E & = \vert x\rangle\langle x\vert_E, \\
\mc{F}_{{B'}E\to B}(\cdot) & = \sum_{x} \langle x | (\cdot)_E| x\rangle _E \mathcal{E}^x_{B'\to B}(\cdot)_{B'}.
\end{align}
However, one of the main goals of our paper is to establish upper bounds on reading rates of memory cells, and the above construction gives trivial bounds. Thus, when employing the concept of environment-parametrized memory cells, one seeks to find associated environment states $\theta^x_E$ that have the least distinguishability as possible while still being able to realize the memory cell via a common interaction channel $\mc{F}_{{B'}E\to B}$. We provide several non-trivial examples of environment-parametrized memory cells in Section~\ref{sec:example}.


\begin{definition}[Jointly teleportation-simulable memory cell]\label{def:tel-cell}
A memory cell $\mc{T}_{\mc{X}}=\{\mc{N}^x_{{B'}\to B}\}_{x\in \mc{X}}$  is jointly teleportation-simulable with associated resource states $\{\omega^x_{RB}\}_{x\in\mc{X}}$ if there exists  an LOCC channel $\mc{L}_{{B'} RB\to B}$ such that, for all input states $\rho_{{B'}}$ and
$\forall x\in\mc{X}$
\begin{equation}\label{eq:tel-cell}
 \mc{N}^x_{{B'}\to B}(\rho_{B'})=\mc{L}_{{B'}RB\to B}(\rho_{B'}\otimes\omega^x_{RB}),
\end{equation}  
 where the LOCC channel input is with respect to the bipartition $RB'|B$. 
\end{definition}

\begin{definition}[Jointly covariant memory cell]
A memory cell $\mc{M}_{\mc{X}}=\{\mc{R}^x_{{B'}\to B}\}_{x\in\mc{X}}$  is  jointly covariant if there exists a group $G$ such that for all $x\in\mc{X}$, the channel $\mc{R}^x$ is a covariant channel with respect to the group $G$ (cf., Definition~\ref{def:covariant}).
\end{definition}

\begin{proposition}
Any jointly covariant memory cell $\mc{M}_{\mc{X}}=\{\mc{R}^x_{{B'}\to B}\}_{x\in\mc{X}}$ is jointly teleportation-simulable with respect to the set $\{\mc{R}^x_{{B'}\to B}(\Phi_{R{B'}})\}_{x\in\mc{X}}$ of resource states.
\end{proposition}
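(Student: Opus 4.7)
My plan is to apply Lemma~\ref{thm:cov-tel-sim-channel} channel by channel and observe that the LOCC protocol it constructs depends only on the shared group $G$ and its unitary representations $(U(g), V(g))$ on $\mc{H}_A$ and $\mc{H}_B$, and not on the particular channel being simulated. Since joint covariance, interpreted in the natural way as sharing both the group and its pair of representations, fixes these data across the entire memory cell, a single LOCC channel $\mc{L}_{ARB\to B}$ will simultaneously simulate every $\mc{R}^x_{A\to B}$ when paired with the resource state $\omega^x_{RB}:=\mc{R}^x_{A\to B}(\Phi_{RA})$, which is exactly what Definition~\ref{def:tel-cell} requires.

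Concretely, I would unpack the proof of Lemma~\ref{thm:cov-tel-sim-channel} to exhibit the LOCC protocol in the following form: first Alice, holding the input $A$ and the $R$ half of $\omega^x_{RB}$, performs a generalized Bell-basis measurement on $AR$ built from the unitary one-design $\{U(g)\}_{g\in G}$ and obtains outcome $g\in G$; then she transmits $g$ to Bob; finally Bob applies $V(g)^\dag$ to his system $B$. Correctness on the resource $\omega^x_{RB}$ follows by commuting $\mc{R}^x$ past Alice's measurement: since $\mc{R}^x$ acts only on the $A$ side of $\Phi_{RA}$ and commutes with any operation on $R$, one can equivalently first perform Alice's measurement on $\rho_A\otimes \Phi_{RA}$, which post-selects the $A$ subsystem into $U(g)\rho U(g)^\dag$ (up to a transpose absorbed into the one-design convention), and then apply $\mc{R}^x$, producing $\mc{R}^x(U(g)\rho U(g)^\dag)=V(g)\mc{R}^x(\rho)V(g)^\dag$ by the covariance relation~\eqref{eq:cov-condition}. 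Bob's correction $V(g)^\dag$ cancels the remaining unitaries, yielding $\mc{R}^x(\rho)$, and averaging over the uniform distribution on $g$ preserves this equality.

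The only conceptual step, and thus the main if mild obstacle, is to justify that both Alice's measurement and Bob's correction are manifestly independent of the label $x$. Alice's measurement is so by construction, since the one-design $\{U(g)\}$ depends only on the common input representation on $\mc{H}_A$ fixed by Definition~\ref{def:covariant}. Bob's correction $V(g)^\dag$ is independent of $x$ precisely because joint covariance provides a common output representation $V(g)$ on $\mc{H}_B$ for the whole memory cell; under this hypothesis the constructed channel $\mc{L}_{ARB\to B}$ satisfies~\eqref{eq:tel-cell} with the prescribed resource states $\{\mc{R}^x_{A\to B}(\Phi_{RA})\}_{x\in\mc{X}}$, and joint teleportation simulability follows.
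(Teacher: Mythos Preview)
Your approach is essentially the same as the paper's: the paper's proof simply notes that a fixed POVM $\{E^g_{A'R}\}_{g\in G}$ (depending only on the shared input representation) works for every $\mc{R}^x$ and cites \cite[Appendix~A]{WTB16} for the details, whereas you unpack that construction explicitly. Your careful observation that joint covariance must be read as sharing not just the group $G$ but also the output representation $V(g)$ (so that Bob's correction $V(g)^\dag$ is $x$-independent) is a useful clarification that the paper leaves implicit.
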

\begin{proof}
For a jointly covariant memory cell with respect to a group $G$, all the channels $\mc{R}^x_{{B'}\to B}$ are jointly teleportation-simulable with respect to the resource states $\mc{R}^x_{{B'}\to B}(\Phi_{R{B'}})$ by using a fixed POVM $\{E^g_{B'' R}\}_{g\in G}$, similar to that defined in \cite[Equation (A.4), Appendix A]{WTB16}. See \cite[Appendix A]{WTB16} for an explicit proof.
\end{proof}

\begin{remark}\label{rem:env-tel}
Any jointly teleportation-simulable memory cell with associated resource states $\{\omega^x_{RB}\}_{x\in\mc{X}}$ is environment-parametrized with $\{\omega^x_{RB}\}_{x\in\mc{X}}$ being the associated environment states, an observation that is a direct consequence of definitions. This implies that all jointly covariant memory cells are also environment-parametrized with associated environment states $\{\mc{R}^x_{{B'}\to B}(\Phi_{R{B'}})\}_{x\in\mc{X}}$. 
\end{remark}

\section{Quantum reading protocols and quantum reading capacity}

\label{sec:q-r-p}

In a quantum reading protocol, we consider an encoder and a reader (decoder). An encoder is one who encodes a message onto a physical device that is delivered to Bob, a receiver, whose task it is to read the message. We also refer to Bob as the reader. The quantum reading task comprises the estimation of a message encoded in the form of a sequence of quantum channels chosen from a given memory cell $\{\mc{N}^x_{{B'}\to B}\}_{x\in\mc{X}}$, where $\mc{X}$ is an alphabet. In the most general setting considered in our paper, the reader can use an adaptive strategy for quantum reading.

\begin{figure}[ptb]
\begin{center}
\includegraphics[
width=\linewidth
]{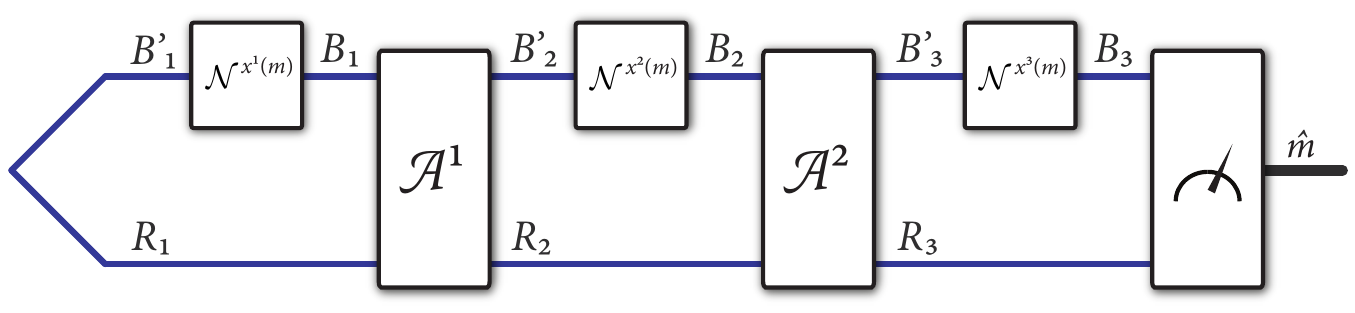}
\end{center}
\caption{The figure depicts a quantum reading protocol that calls a memory cell three times to decode the message $m$ as $\hat{m}$. See the discussion in Section~\ref{sec:q-r-p} for a detailed description of a quantum reading protocol.}%
\label{fig:reading-protocol}%
\end{figure}

Both the encoder and the reader agree upon a memory cell $\mathcal{S}_{\mathcal{X}}=\{\mathcal{N}^x_{B'\to B}\}_{x\in\mathcal{X}}$ before executing the reading protocol. 
We consider a classical message set $\mathcal{M}=\{1,2,\ldots,|\mathcal{M}|\}$, and let $M$ be an associated system denoting a classical register for the message. The encoder encodes a message $m\in\mathcal{M}$ using a codeword
$x^n(m)=(x_1(m),x_2(m),\ldots, x_n(m))$
of length $n$, where $x_i(m)\in\mathcal{X}$ for all $ i\in\{1,2,\ldots,n\}$. Each quantum channel in a codeword, each of which represents one part of the stored information, is only read once. Each codeword identifies with a corresponding sequence of quantum channels chosen from the memory cell $\mathcal{S}_{\mathcal{X}}$:
\begin{equation}
\left(\mathcal{N}^{x_1(m)}_{B'_1\to B_1}, \mathcal{N}^{x_2(m)}_{B'_2\to B_2},\ldots,\mathcal{N}^{x_n(m)}_{B'_n\to B_n}\right).
\end{equation}

An adaptive decoding strategy $\mathcal{J}_{\mc{S}_\mc{X}}$ makes $n$ calls to the memory cell $\mc{S}_{\mc{X}}$. It is specified in terms of a transmitter state $\rho_{R_1B'_1}$, a set of adaptive, interleaved channels $\{\mc{A}^i_{R_iB_i\to R_{i+1}B'_{i+1}}\}_{i=1}^{n-1}$, and a final quantum measurement $\{\Lambda^{\hat{m}}_{R_nB_n}\}_{\hat{m}\in\mc{M}}$ that outputs an estimate $\hat{m}$ of the message $m$. The strategy begins with Bob preparing the input state $\rho_{R_1B'_1}$ and sending the $B'_1$ system into the channel $\mc{N}^{x_1(m)}_{B'_1\to B_1}$. The channel outputs the system $B_1$, which is available to Bob. He adjoins the system $B_1$ to the system $R_1$ and applies the channel $\mc{A}^1_{R_1B_1\to R_2B'_2}$. The channel $\mc{A}^i_{R_iB_i\to R_{i+1}B'_{i+1}}$ is called adaptive because it can take an action conditioned on the information in the system $B_i$, which itself might contain partial information about the message $m$. Then, he sends the system $B'_2$ into the second use of the channel $\mc{N}^{x_2(m)}_{B'_2\to B_2}$, which outputs a system $B_2$. The process of successively using the channels interleaved by the adaptive channels  continues $n-2$ more times, which results in the final output systems $R_n$ and $B_n$ with Bob. Next, he performs a measurement $\{\Lambda^{\hat{m}}_{R_nB_n}\}_{\hat{m}\in\mc{M}}$ on the output state $\rho_{R_n B_n}$, and the measurement outputs an estimate $\hat{m}$ of the original message~$m$.
See Figure~\ref{fig:reading-protocol} for a depiction of a quantum reading protocol.

It is apparent that a non-adaptive strategy is a special case of an adaptive strategy in which the reader does not perform any adaptive channels and instead uses $\rho_{RB^{'n}}$ as the transmitter state with each $B'_i$ system passing through the corresponding channel $\mc{N}^{x_i(m)}_{B'_i\to B_i}$ and $R$ being an idler system. The final step in such a non-adaptive strategy is to perform a decoding measurement on the  joint system $RB^n$. 

As we argued previously,
it is natural to consider the use of an adaptive strategy for a quantum reading protocol because
the channel input and output systems are in the same physical location. In a quantum reading protocol, the reader assumes the role of both the transmitter and receiver. 

\begin{definition}[Quantum reading protocol]
\label{def:QR}
An $(n,R,\varepsilon)$ quantum reading protocol for a memory cell $\mc{S}_{\mathcal{X}}$ is defined by an encoding map $\mc{E}:\mc{M}\to \mc{X}^{\times n}$ and an adaptive strategy $\mathcal{J}_{\mc{S}_\mc{X}}$ with measurement $\{\Lambda_{R_n B_n}^{\hat{m}}\}_{\hat{m}\in\mc{M}}$. The protocol is such that the average success probability is at least $1-\varepsilon$, where $\varepsilon\in(0,1)$:
\begin{multline}
1- \varepsilon \leq 1 - p_{\operatorname{err}} := \\
\frac{1}{|\mc{M}|} 
\sum_{m}\Tr\big\{\Lambda^{(m)}_{R_nB_n}\big(\mc{N}^{x_n(m)}_{B'_n\to B_n}\circ\mc{A}^{{n-1}}_{R_{n-1}B_{n-1}\to R_nB'_n}\circ\\
\cdots\circ\mc{A}^{1}_{R_1B_1\to R_2B'_2}\circ\mc{N}^{x_1(m)}_{B'_1\to B_1}\big)(\rho_{R_1{B'}_1})\big\}.
\end{multline} 
The rate $R$ of a given $(n,R,\varepsilon)$ quantum reading protocol is equal to the number of bits read per channel use:
\begin{equation}
R:=\frac{1}{n}\log_2|\mathcal{M}|.
\end{equation}
\end{definition}

To arrive at a definition of quantum reading capacity, we demand that there exist a sequence of reading protocols, indexed by $n$, for which the error probability $p_e\to 0$ as $n\to \infty$ at a fixed rate~$R$. In more detail, consider the following definitions:

\begin{definition}[Achievable rate]
A rate $R$ is called achievable if for all $ \varepsilon\in (0,1]$, $\delta>0$, and sufficiently large $n$, there exists an $(n,R-\delta,\varepsilon)$ code. 
\end{definition}

\begin{definition}[Quantum reading capacity]\label{def:capacity}
The quantum reading capacity $\mc{C}(\mc{S}_{\mc{X}})$ of a memory cell $\mc{S}_{\mc{X}}$ is defined as the supremum of all achievable rates.
\end{definition}

We also provide the following formal definitions for strong converse rates and the strong converse reading capacity:

\begin{definition}[Strong converse  rate]
A rate $R$ is called a strong converse rate if for all $ \varepsilon\in [0,1)$, $\delta>0$, and sufficiently large $n$, there does not exist an $(n,R+\delta,\varepsilon)$ code. 
\end{definition}

\begin{definition}[Strong converse quantum reading capacity]\label{def:str-conv-capacity}
The strong converse quantum reading capacity $\widetilde{\mc{C}}(\mc{S}_{\mc{X}})$ of a memory cell $\mc{S}_{\mc{X}}$ is defined as the infimum of all strong converse rates.
\end{definition}

The following inequality is a direct consequence of the definitions:
\begin{equation}
\mc{C}(\mc{S}_{\mc{X}}) \leq \widetilde{\mc{C}}(\mc{S}_{\mc{X}}).
\end{equation}

\section{Fundamental limits on quantum reading capacities}

\label{sec:converse bounds}
In this section, we establish second-order  and strong converse bounds for any environment-parametrized memory cell with associated environment states (Definition~\ref{def:env-cell}). We also establish general weak converse (upper) bounds on various reading capacities.  
  
\subsection{Converse bounds for environment-parametrized memory cells}

In this section, we provide upper bounds on the performance of quantum reading of environment-parametrized memory cells with associated environment states.  To begin with,
let us consider an $(n,R,\varepsilon)$ quantum reading protocol of an environment-parametrized memory cell $\mc{E}_\mc{X}=\{\mc{E}^x\}_{x\in\mc{X}}$, as given in Definition~\ref{def:env-cell}.
 The structure of reading protocols involving adaptive channels simplifies immensely for  memory cells that are teleportation simulable and more generally environment-parametrized. 
 This is a consequence of observations made in \cite[Section V]{BDSW96}, \cite[Theorem 14 and Remark 11]{Mul12}, and \cite{DM14}.
 For such memory cells, a quantum reading protocol can be simulated by one in which every channel use is replaced by the encoder preparing the environment state $\theta^{x_i(m)}_{E}$ from \eqref{eq:env-cell} and then interacting the channel input with the interaction channel $\mc{F}_{B'E\to B}$. Critically, each interaction channel
$\mc{F}_{B'E\to B}$ 
 is independent of the message $m\in\mc{M}$. Let 
\begin{equation}
\theta^{x^n(m)}_{E^n}:= \bigotimes_{i=1}^n\theta^{x_i(m)}_{E}
\end{equation}
denote the environment state needed for the simulation of all $n$ of the channel uses in the quantum reading protocol.
This leads to the translation of a general quantum reading protocol to one in which all of the rounds of adaptive channels can be delayed until the very end of the protocol, such that the resulting protocol is a non-adaptive quantum reading protocol.
\begin{figure}[ptb]
\begin{center}
\includegraphics[
width=\linewidth
]{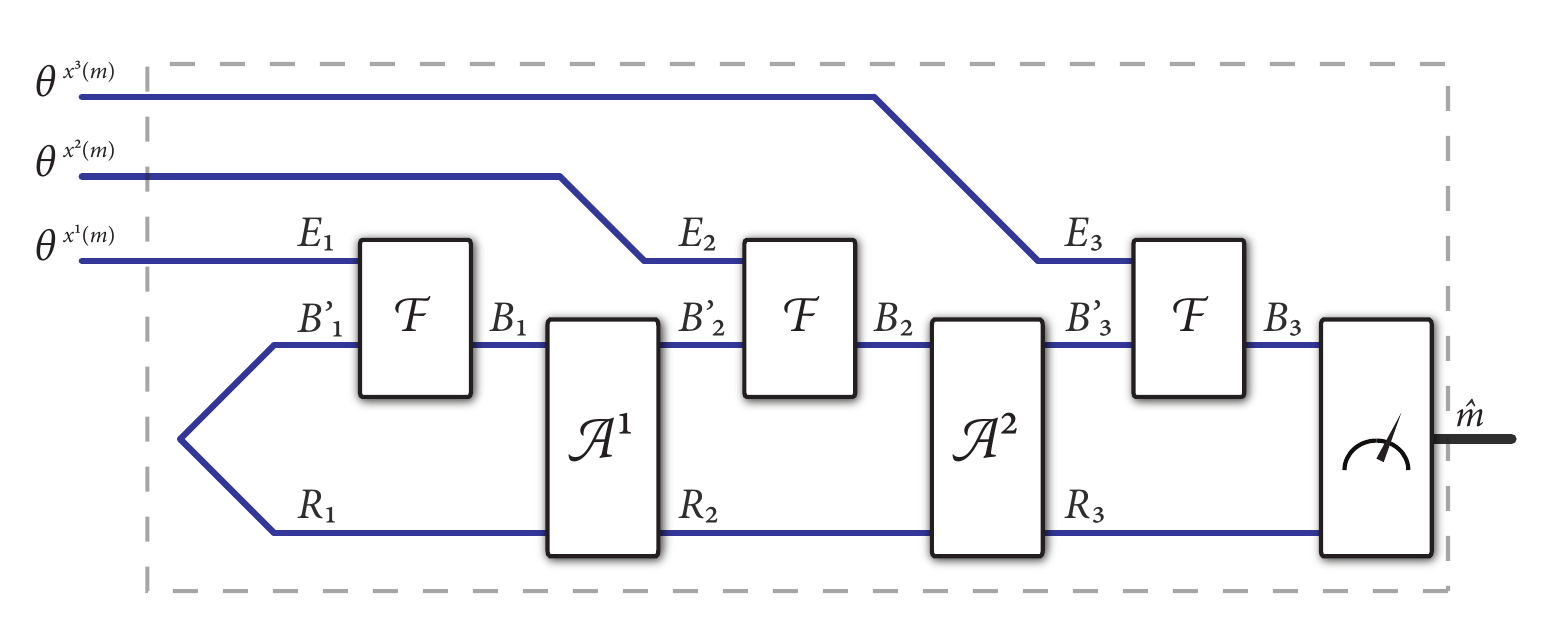}
\end{center}
\caption{The figure depicts how a quantum reading protocol of an environment-parametrized memory cell with associated environment states $\{\theta^x_E\}_{x\in \mathcal{X}}$ can be rewritten as a protocol that tries to decode the message $m$ from the environment states $\theta^{x^n(m)}_{E^n}$. All of the operations inside the dashed lines can be understood as a measurement on the states $\theta^{x^n(m)}_{E^n}$.}%
\label{fig:env-param-reading-protocol}%
\end{figure}
The following proposition, holding for any environment-parametrized memory cell with associated environment states, is a direct consequence of observations made in \cite[Section V]{BDSW96}, \cite[Theorem 14 and Remark 11]{Mul12}, and \cite{DM14}. We thus omit a detailed proof, but Figure~\ref{fig:env-param-reading-protocol} clarifies the main idea: any quantum reading protocol of an environment-parametrized memory cell can be rewritten as in Figure~\ref{fig:env-param-reading-protocol}. Inspecting the figure, we see that the protocol can be understood as a non-adaptive decoding of the environment states  $\theta^{x^n(m)}_{E^n}$,
with the decoding measurement constrained to contain the interaction channel $\mc{F}_{B'E\to B}$ interleaved between arbitrary adaptive channels.  Thus, Proposition~\ref{thm:sim-red} establishes that an adaptive strategy used for decoding an environment-parametrized memory cell can be reduced to a particular non-adaptive decoding of the environment states $\theta^{x^n(m)}_{E^n}$. 

\begin{proposition}[Adaptive-to-non-adaptive reduction]\label{thm:sim-red}
Let $\mc{E}_\mc{X}=\{\mc{E}^x_{B'\to B}\}_{x\in\mc{X}}$ be an environment-parametrized memory cell with an associated set of environment states $\{\theta^x_{E}\}_{x\in\mc{X}}$ and a fixed interaction channel $\mc{F}_{B'E\to B}$, as given in Definition~\ref{def:env-cell}.
Then any quantum reading protocol as stated in Definition~\ref{def:QR}, which uses an adaptive strategy $\mathcal{J}_{\mc{E}_\mc{X}}$, can be simulated as a non-adaptive quantum reading protocol, in the following sense:
\begin{multline}
\Tr\big\{\Lambda^{\hat{m}}_{E_nB_n}\big(\mc{E}^{x_n(m)}_{B'_n\to B_n}\circ\mc{A}^{{n-1}}_{E_{n-1}B_{n-1}\to E_nB'_n}\circ
\\
\cdots\circ\mc{A}^{1}_{E_1B_1\to E_2B'_2}\circ\mc{E}^{x_1(m)}_{B'_1\to B_1}\big)(\rho_{E_1B'_1})\big\}\\
=\Tr\left\{\Gamma^{\hat{m}}_{E^n}(\bigotimes_{i=1}^n\theta^{x_i(m)}_{E})\right\}, \label{eq:tel-sim-red}
\end{multline}
for some POVM $\{\Gamma^{\hat{m}}_{E^n}\}_{\hat{m}\in \mc{M}}$ that depends on $\mathcal{J}_{\mc{E}_\mc{X}}$.
\end{proposition}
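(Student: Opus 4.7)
The plan is to exploit the defining property of an environment-parametrized memory cell, namely the factorization $\mc{E}^x_{A\to B}(\cdot)=\mc{F}_{AE\to B}(\cdot\otimes\theta^x_E)$, to absorb every message-independent ingredient of the adaptive protocol into a single POVM on the joint environment register $E^n$. The key observation is that each call to the memory cell can be replaced by (i) preparing the state $\theta^{x_i(m)}_E$ on a fresh environment register and (ii) applying the \emph{fixed} interaction channel $\mc{F}_{AE\to B}$; all $m$-dependence is thereby pushed onto the tensor product $\bigotimes_{i=1}^n\theta^{x_i(m)}_E$, while the rest of the circuit --- Arya's transmitter state $\rho_{E_1A_1}$, the adaptive channels $\mc{A}^i$, the $n$ copies of $\mc{F}$, and the terminal measurement $\{\Lambda^{\hat m}_{E_nB_n}\}$ --- becomes a fixed decoding procedure independent of $m$.

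Concretely, I would proceed as follows. First, substitute \eqref{eq:env-cell} into every occurrence of $\mc{E}^{x_i(m)}_{A_i\to B_i}$ on the left-hand side of \eqref{eq:tel-sim-red}. Since the environment register accessed by $\mc{F}$ at the $i$-th step is fresh and disjoint from every earlier system, the preparation of $\theta^{x_i(m)}_E$ commutes past all prior adaptive channels $\mc{A}^j$ with $j<i$ and can be pulled to the start of the protocol, yielding the global environment state $\bigotimes_{i=1}^n\theta^{x_i(m)}_E$ on $E^n$. Second, define the message-independent CPTP map $\mc{T}_{E^n\to E_nB_n}$ that takes an input $\sigma_{E^n}$, adjoins it with $\rho_{E_1A_1}$, and pushes the joint state through the circuit $\mc{F}_{A_1E\to B_1}, \mc{A}^1, \mc{F}_{A_2E\to B_2},\ldots, \mc{A}^{n-1}, \mc{F}_{A_nE\to B_n}$. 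Setting
\begin{equation}
\Gamma^{\hat m}_{E^n} := \mc{T}^{\dagger}\!\(\Lambda^{\hat m}_{E_nB_n}\),
\end{equation}
with $\mc{T}^\dagger$ denoting the Heisenberg-picture adjoint of $\mc{T}$, the family $\{\Gamma^{\hat m}_{E^n}\}_{\hat m}$ is a POVM because $\mc{T}^\dagger$ is completely positive and unital (unitality being equivalent to trace preservation of $\mc{T}$, which holds by construction). Equation \eqref{eq:tel-sim-red} then follows at once from the adjoint pairing $\Tr\{A\,\mc{T}(\sigma)\}=\Tr\{\mc{T}^\dagger(A)\,\sigma\}$ applied with $A=\Lambda^{\hat m}_{E_nB_n}$ and $\sigma=\bigotimes_{i=1}^n\theta^{x_i(m)}_E$.

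There is no substantive technical obstacle. The only step requiring care is the commutation argument that pulls every environment preparation to the front of the protocol, but this is immediate because the freshly introduced environment registers are inert until first touched by the corresponding $\mc{F}$, and therefore commute with every operation on disjoint systems. This rearrangement is precisely what is depicted pictorially in Figure~\ref{fig:env-param-reading-protocol}, and the entire construction is a direct adaptation of the adaptive-to-non-adaptive reductions developed for LOCC-simulable channels and channel discrimination in \cite{BDSW96,Mul12,DM14}.
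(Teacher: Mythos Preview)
Your proposal is correct and follows exactly the approach the paper takes: the paper omits a detailed proof, pointing instead to Figure~\ref{fig:env-param-reading-protocol} and the references \cite{BDSW96,Mul12,DM14}, and your argument is precisely the algebraic content of that figure---pull the fresh environment preparations to the front by commutativity on disjoint systems, absorb the remaining message-independent circuit (transmitter state, interaction channels $\mc{F}$, adaptive channels, final POVM) into a fixed CPTP map $\mc{T}$, and define $\Gamma^{\hat m}_{E^n}=\mc{T}^\dagger(\Lambda^{\hat m}_{E_nB_n})$, which is a POVM since $\mc{T}^\dagger$ is completely positive and unital. There is nothing to add; your write-up simply makes explicit what the paper leaves to the picture.
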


Using the observation in Proposition~\ref{thm:sim-red}, we now show how to arrive at upper bounds on the performance of any reading protocol that uses an environment-parametrized memory cell with associated environment states $\{\theta^x_{E}\}_{x\in\mc{X}}$.

Our proof strategy is to employ a generalized divergence (recall \eqref{eq:gen-div-mono}) to make a comparison between the states involved in the actual  reading protocol and one in which the memory cell is fixed as $\hat{\mc{E}}:=\{\mc{P}_{B'\to B}\}$, containing only a single channel 
with environment state $\hat{\theta}_{E}$
and interaction channel $\mc{F}_{B'E\to B}$.
 The latter reading protocol contains no information about the message $m$.
 Observe that the augmented memory cell $\{\mc{E}_\mc{X},\hat{\mc{E}}\}$ is environment-parametrized with associated environment states $\{\{\theta^x_{E}\}_{x\in\mc{X}},\hat{\theta}_{E}\}$.

One of the main steps that we use in our proof is as follows. Consider the following states:\begin{align}
\sigma_{M\hat{M}} & =\sum_{m\in\mc{M},\atop\hat{m}\in\mc{M}}\frac{1}{|\mc{M}|}|m\>\<m|_M\otimes p_{\hat{M}|M}\(\hat{m}|m\)|\hat{m}\>\<\hat{m}|_{\hat{M}},\label{eq:out-useful}\\
\tau_{M\hat{M}} & =\sum_{m\in\mc{M}}
\frac{1}{|\mc{M}|}
|m\>\<m|_M\otimes \hat{\tau}_{\hat{M}},\label{eq:out-useless}
\end{align}
where we suppose that $p_{\hat{M}|M}(\hat{m}|m)$ is a distribution that results after the final decoding step of an $(n,R,\varepsilon)$ quantum reading protocol, while  $\hat{\tau}_{\hat{M}}$
is a fixed state. By applying
the comparator test $\{\Pi_{M\hat{M}},I_{M\hat{M}}-\Pi_{M\hat{M}}\}$, defined by
\begin{equation}
\Pi_{M\hat{M}} := \sum_{m}|m\>\<m|_M\otimes|m\>\<m|_{\hat{M}},\label{eq:comparator-test}
\end{equation}
and using definitions, we arrive at the following inequalities that hold for an arbitrary
$(n,R,\varepsilon)$ quantum reading protocol:
\begin{align}
\Tr\{\Pi_{M\hat{M}}\sigma_{M\hat{M}}\} & \geq 1-\varepsilon,\\
 \Tr\{\Pi_{M\hat{M}}\tau_{M\hat{M}}\} & = \frac{1}{|\mc{M}|}.
\end{align}
Then by applying the definition of the $\varepsilon$-hypothesis-testing divergence (recall \eqref{eq:hypo-test-div}), we arrive at the following bound, which is a critical first step for us to establish second-order  and strong converse bounds:
\begin{equation}\label{eq:eps-div-M}
D^\varepsilon_h\!\(\sigma_{M\hat{M}}\Vert\tau_{M\hat{M}}\)\geq \log_2|\mc{M}|.
\end{equation}

In the converse proof that follows, the main idea for arriving at an upper bound on performance is to make a comparison between the case in which the message $m$ is encoded in a sequence of quantum channels and the case in which it is not.

\subsubsection{Second-order asymptotics and strong converse}

In this section, we derive second-order asymptotics and strong converse bounds for environment-parametrized memory cells with associated environment states. We begin by deriving a relation between the quantum reading rate and the hypothesis testing divergence.  

\begin{lemma}\label{thm:hyp-test-bound}
The following bound holds for an $(n,R,\varepsilon)$ reading protocol that uses an environment-parametrized memory cell with associated environment states $\{\theta^x_{E}\}_{x\in\mc{X}}$, as stated in Definition~\ref{def:env-cell}:
\begin{equation}
\log_2 |\mc{M}| = nR \leq 
\sup_{p_{X^n}}\inf_{\hat{\theta}} D_h^{\varepsilon}(\theta_{X^n E^n} \Vert \hat{\theta}_{X^n E^n} ) ,
\end{equation}
where
\begin{align}
\theta_{X^n E^n} & := \sum_{x^n\in\mc{X}^n}p_{X^n}(x^n)|x^n\>\<x^n|_{X^n}\otimes \theta^{x^n}_{E^n}, \\
\hat{\theta}_{X^n E^n} & := \sum_{x^n\in\mc{X}^n}p_{X^n}(x^n)|x^n\>\<x^n|_{X^n}\otimes \hat{\theta}_{E}^{\otimes n}.
\end{align}
and  $x^n:= x_1x_2\cdots x_n$ and $\theta_{E^n}^{x^n}=\bigotimes_{i=1}^n\theta_{E}^{x_i}$.
\end{lemma}

\begin{proof}
Our proof begins by applying the observation from Proposition~\ref{thm:sim-red}, which allows us to reduce any adaptive protocol to a non-adaptive one. If the encoder chooses the message $m$ uniformly at random and places it in a system $M$, the output state in \eqref{eq:out-useful} after Bob's decoding measurement in the actual protocol is
\begin{equation}
\sigma_{M\hat{M}} = \sum_{m,\hat{m}}\frac{1}{|\mathcal{M}|}|m\>\<m|_M\otimes \Tr\!\left\{\Gamma_{E^n}^{\hat{m}}\theta^{x^n(m)}_{E^n}\right\}|\hat{m}\>\<\hat{m}|_{\hat{M}},
\end{equation}
where
\begin{equation}
\theta^{x^n(m)}_{E^n} := 
\bigotimes_{i=1}^n\theta^{x_i(m)}_{E}.
\end{equation}
The success probability $p_{\operatorname{succ}}:=1-p_{\operatorname{err}}$ is defined as
\begin{equation}
p_{\operatorname{succ}}:=\frac{1}{|\mc{M}|}\sum_{m\in\mc{M}}\Tr\!\left\{\Gamma_{E^n}^{m}\theta^{x^n(m)}_{E^n}\right\}.
\end{equation} 
The output state in \eqref{eq:out-useless} after Bob's decoding measurement in a reading protocol that uses the memory cell $\hat{\mc{E}}$ is 
\begin{equation}
\tau_{M\hat{M}} = \sum_m\frac{1}{|\mathcal{M}|}|m\>\<m|_M\otimes\sum_{\hat{m}}\Tr\!\left\{\Gamma_{E^n}^{\hat{m}}\hat{\theta}^{\otimes n}_{E}\right\}|\hat{m}\>\<\hat{m}|_{\hat{M}}.
\end{equation}
Then we can proceed with bounding a generalized divergence as follows:
\begin{align}
& \mathbf{D}\!\(
\{p_{\operatorname{succ}}, 
1-p_{\operatorname{succ}}\} \middle \Vert 
\left\{ 1/|\mc{M}|, 1-1/|\mc{M}|\right\}
\) \nonumber\\
&\leq \mathbf{D}\!\(\sigma_{M\hat{M}} \Vert \tau_{M\hat{M}}\) \\
& \leq  \mathbf{D}( \theta_{ME^n }\Vert \hat{\theta}_{ME^n }),
\end{align}
where
\begin{align}
\theta_{ME^n } &  := \sum_{m}\frac{1}{|\mathcal{M}|}|m\>\<m|_M\otimes \theta^{x^n(m)}_{E^n},\\
\hat{\theta}_{ME^n } & := \sum_m\frac{1}{|\mathcal{M}|}|m\>\<m|_M\otimes\hat{\theta}^{\otimes n}_{E}.
\end{align}
The first inequality follows from applying the comparator test in \eqref{eq:comparator-test} to
$\sigma_{M\hat{M}}$ and $\tau_{M\hat{M}}$.
The second inequality follows from the data-processing inequality in \eqref{eq:gen-div-mono} as the final measurement is a quantum channel.
Since the above bound holds for all $\hat{\theta}_{E}$, we conclude that
\begin{multline}
\mathbf{D}\!\(
\{p_{\operatorname{succ}}, 
1-p_{\operatorname{succ}}\} \middle \Vert 
\left\{ 1/|\mc{M}|, 1-1/|\mc{M}|\right\}
\) \leq \\
\inf_{\hat{\theta}}\mathbf{D}( \theta_{ME^n }\Vert \hat{\theta}_{ME^n }) .
\end{multline}
Now optimizing over all input distributions, we arrive at the following general bound:
\begin{multline}
\mathbf{D}\!\(
\{p_{\operatorname{succ}}, 
1-p_{\operatorname{succ}}\} \middle \Vert 
\left\{ 1/|\mc{M}|, 1-1/|\mc{M}|\right\}
\)\leq \\
\sup_{p_{X^n}}\inf_{\hat{\theta}} \mathbf{D}(\theta_{X^n E^n} \Vert \hat{\theta}_{X^n E^n}) \label{eq:gen-div-bound-JP-cells} .
\end{multline}
Observe that the lower bound contains the relevant performance parameters such as success probability and number of messages, while the upper bound is an information quantity, depending exclusively on the memory cell
$\mc{E}_\mc{X}$. 

Substituting the hypothesis testing divergence in the above and applying \eqref{eq:eps-div-M}, we find the following bound for an $(n,R,\varepsilon)$ reading protocol that uses an environment-parametrized memory cell with associated environment states $\{\theta^x_{E}\}_{x\in\mc{X}}$:
\begin{equation}
\log_2 |\mc{M}| = nR \leq 
\sup_{p_{X^n}}\inf_{\hat{\theta}} D_h^{\varepsilon}(\theta_{X^n E^n} \Vert \hat{\theta}_{X^n E^n}).
\end{equation}
This concludes the proof.
\end{proof}

A direct consequence of Lemma~\ref{thm:hyp-test-bound} and \cite[Theorem 4]{TT13} is the following proposition:
\begin{proposition}\label{thm:env-cell-sec-order}
For an $(n,R,\varepsilon)$ quantum reading protocol for an environment-parametrized memory cell $\mc{E}_{\mc{X}}=\{\mc{E}^x\}_{x\in\mc{X}}$ with associated environment states $\{\theta^x_{E}\}_{x\in\mc{X}}$ (as stated in Definition~\ref{def:env-cell}), the following inequality holds 
\begin{equation}
R\leq  \max_{p_X}I(X;E)_\theta +\sqrt{\frac{V_\varepsilon(\mc{E}_{\mc{X}})}{n}}\Phi^{-1}(\varepsilon)+O\!\(\frac{\log n}{n}\),
\end{equation}
where
\begin{equation}\label{eq:cq-theta}
\theta_{XE}=\sum_{x\in\mc{X}}p_X(x)|x\>\<x|_X\otimes\theta^x_E,
\end{equation}
and
\begin{equation}
V_\varepsilon(\mc{E}_{\mc{X}})= \left\{ 
\begin{tabular}{c c}
$\min_{p_X\in P({\mc{E}})} V(\theta_{XE}\V\theta_X\otimes\theta_{E} )$,  & $\varepsilon \in  (0,1/2]$\\
$\max_{p_X\in P({\mc{E}})} V(\theta_{XE}\V\theta_X\otimes\theta_{E} )$, & $\varepsilon \in (1/2,1)$
\end{tabular}
\right.
\end{equation}
where $P({\mc{E}})$ denotes a set $\{p_X\}$ of probability distributions that achieve the maximum in 
$\max_{p_X}I(X;E)_\theta$.
\end{proposition}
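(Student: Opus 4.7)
The plan is to invoke the second-order expansion of Tomamichel and Tan \cite[Theorem 4]{TT13} for classical coding over a classical-quantum channel, applied to the one-shot converse bound that the paragraph immediately preceding the proposition has already established. That derivation produces
\begin{equation*}
nR \le \sup_{p_{X^n}} \inf_{\hat\theta} D_h^{\varepsilon}\!\(\sum_{x^n} p_{X^n}(x^n) |x^n\>\<x^n|_{X^n} \otimes \theta^{x^n}_{E^n} \,\middle\Vert\, \sum_{x^n} p_{X^n}(x^n) |x^n\>\<x^n|_{X^n} \otimes \hat\theta_E^{\otimes n}\),
\end{equation*}
in which the environment-parametrization structure forces the product factorization $\theta^{x^n}_{E^n} = \bigotimes_{i=1}^n \theta^{x_i}_E$. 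The right-hand side is therefore exactly the hypothesis-testing meta-converse quantity for $n$ parallel uses of the classical-quantum channel $W : x \mapsto \theta^x_E$ with classical input alphabet $\mc{X}$.

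Next I would apply \cite[Theorem 4]{TT13} directly to this quantity. That theorem gives the expansion
\begin{equation*}
\sup_{p_{X^n}} \inf_{\hat\theta} D_h^{\varepsilon}(\cdots) \le n \max_{p_X} I(X;E)_\theta + \sqrt{n\, V_\varepsilon(\mc{E}_{\mc{X}})} \, \Phi^{-1}(\varepsilon) + O(\log n),
\end{equation*}
with $\theta_{XE}$ as in \eqref{eq:cq-theta}, the mutual information evaluated at single-letter input distributions, and $V_\varepsilon(\mc{E}_{\mc{X}})$ equal to the minimum or the maximum of $V(\theta_{XE} \Vert \theta_X \otimes \theta_E)$ over capacity-achieving distributions $p_X \in P(\mc{E})$ depending on whether $\varepsilon \le 1/2$ or $\varepsilon > 1/2$. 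Dividing through by $n$ delivers the stated bound, with the $O(\log n)/n$ remainder.

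The main point to verify is that the hypotheses of \cite[Theorem 4]{TT13} match our setting: the ``input'' side is an arbitrary (possibly non-product) classical register $X^n$, and the ``output'' side is the i.i.d.\ product state $\theta^{x^n}_{E^n}$ selected by the codeword $x^n$, which is exactly the memoryless classical-quantum channel coding setup treated there. The single-letterization of the leading capacity term and the choice of min versus max in the definition of $V_\varepsilon$ in the two regimes of $\varepsilon$ are both internal to the Tomamichel--Tan statement, so no additional single-letterization or optimization argument is needed. Only the converse direction of their result is invoked, and the reduction from adaptive to non-adaptive reading protocols supplied by Proposition~\ref{thm:sim-red} is what allows this i.i.d.\ cq-channel bound to apply to the original adaptive reading protocol without loss.
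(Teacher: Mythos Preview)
Your proposal is correct and follows exactly the paper's approach: the authors state that the proposition is ``a direct consequence of the above bound and \cite[Theorem 4]{TT13},'' and your write-up simply unpacks that one-line justification by identifying the right-hand side of the $D_h^\varepsilon$ bound with the meta-converse for the memoryless cq-channel $x\mapsto\theta^x_E$ and then invoking the Tomamichel--Tan second-order expansion. There is nothing materially different between the two.
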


\begin{proposition}\label{thm:env-cell-strong}
The success probability $p_{\operatorname{succ}}$ of any $(n,R,\varepsilon)$ quantum reading protocol for an environment-parametrized memory cell $\mc{E}_\mc{X}$ with associated environment states $\{\theta^x_{E}\}_{x\in\mc{X}}$ is bounded
from above
 as 
\begin{equation}
p_{\operatorname{succ}}\leq 2^{-n\sup_{\alpha>1}\(1-\frac{1}{\alpha}\) \(R-\wt{I}_\alpha(\mc{E}_\mc{X})\)},
\end{equation}
where
\begin{equation}
\wt{I}_\alpha(\mc{E}_\mc{X})=\max_{p_X}\wt{I}_\alpha(X;E)_\theta,
\end{equation}
for $\theta_{XE}$ as defined in \eqref{eq:cq-theta}.
\end{proposition}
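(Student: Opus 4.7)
The plan is to mimic the proof of Proposition~\ref{thm:env-cell-sec-order}, but to instantiate the generalized divergence $\mathbf{D}$ in the master bound \eqref{eq:gen-div-bound-JP-cells} with the sandwiched R\'enyi relative entropy $\wt{D}_\alpha$ for $\alpha>1$ in place of the hypothesis-testing divergence. Since $\wt{D}_\alpha$ obeys data processing for $\alpha\in[1/2,1)\cup(1,\infty)$, the entire chain of reasoning (comparator test, data processing, and restriction of the reference state to a product $\hat\theta_E^{\otimes n}$) that produced \eqref{eq:gen-div-bound-JP-cells} applies verbatim. Proposition~\ref{thm:sim-red} again lets me replace the adaptive protocol by a non-adaptive POVM $\{\Gamma^{\hat m}_{E^n}\}$ acting on the product environment states $\theta^{x^n(m)}_{E^n}=\bigotimes_{i=1}^n\theta^{x_i(m)}_E$.

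Next, I would lower-bound the resulting left-hand side. Applying \eqref{eq:def_sre} to the two Bernoulli distributions and retaining only the success branch of the defining sum yields, for every $\alpha>1$,
\begin{equation}
\wt{D}_\alpha\!\(\{p_{\operatorname{succ}},1-p_{\operatorname{succ}}\}\V\{|\mc{M}|^{-1},1-|\mc{M}|^{-1}\}\)\geq \frac{\alpha}{\alpha-1}\log_2 p_{\operatorname{succ}}+\log_2|\mc{M}|.
\end{equation}
On the right-hand side, I would use the block-diagonal form of the two cq states in the classical register $X^n$ together with the product structure $\theta^{x^n}_{E^n}=\bigotimes_i\theta^{x_i}_E$ and additivity of $\wt{D}_\alpha$ on tensor products to rewrite the cq R\'enyi divergence as $\tfrac{1}{\alpha-1}\log_2\sum_{x^n}p_{X^n}(x^n)\prod_i 2^{(\alpha-1)\wt{D}_\alpha(\theta^{x_i}_E\V\hat\theta_E)}$. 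Upper-bounding the sum by $(\max_x 2^{(\alpha-1)\wt{D}_\alpha(\theta^x_E\V\hat\theta_E)})^n$ and then minimizing over $\hat\theta_E$ produces
\begin{equation}
\sup_{p_{X^n}}\inf_{\hat\theta_E}\wt{D}_\alpha\!\(\sum_{x^n}p_{X^n}(x^n)|x^n\>\<x^n|_{X^n}\otimes\theta^{x^n}_{E^n}\V\sum_{x^n}p_{X^n}(x^n)|x^n\>\<x^n|_{X^n}\otimes\hat\theta_E^{\otimes n}\)\leq n\,\wt{I}_\alpha(\mc{E}_\mc{X}),
\end{equation}
once the worst-case quantity $\inf_{\hat\theta_E}\max_x\wt{D}_\alpha(\theta^x_E\V\hat\theta_E)$ is identified with $\max_{p_X}\wt{I}_\alpha(X;E)_\theta=\wt{I}_\alpha(\mc{E}_\mc{X})$.

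Combining the two displays with $\log_2|\mc{M}|=nR$, rearranging to isolate $\log_2 p_{\operatorname{succ}}$, and taking the supremum over $\alpha>1$ delivers the claim. The main (and really the only) nontrivial step is the single-letterization identity $\inf_{\hat\theta_E}\max_x\wt{D}_\alpha(\theta^x_E\V\hat\theta_E)=\max_{p_X}\wt{I}_\alpha(X;E)_\theta$, which is the sandwiched-R\'enyi analogue of Sibson's minimax identity for classical-quantum channels; it is by now standard in the strong-converse literature, having been used in the derivation of strong converse rates for classical communication over quantum channels in~\cite{WWY14}, and it is the only place where care is needed beyond the routine manipulations carried out above.
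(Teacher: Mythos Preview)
Your proposal is correct and follows essentially the same route as the paper's proof, which simply says to combine the master bound \eqref{eq:gen-div-bound-JP-cells} with the main result of \cite{WWY14} (and \cite{DW15} for the extension to all $\alpha>1$). You have spelled out the details that the paper leaves implicit: instantiating $\mathbf{D}=\wt{D}_\alpha$, lower-bounding the Bernoulli divergence by $\tfrac{\alpha}{\alpha-1}\log_2 p_{\operatorname{succ}}+\log_2|\mc{M}|$, exploiting the block-diagonal and tensor-product structure to reduce the right-hand side to $n\,\inf_{\hat\theta_E}\max_x\wt{D}_\alpha(\theta^x_E\Vert\hat\theta_E)$, and then invoking the sandwiched-R\'enyi minimax (Sibson-type) identity to identify this with $n\,\wt{I}_\alpha(\mc{E}_\mc{X})$. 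Your identification of that last identity as the only genuinely nontrivial step, and as standard in the strong-converse literature stemming from \cite{WWY14}, is accurate.
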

\begin{proof}
A proof follows by combining the bound in \eqref{eq:gen-div-bound-JP-cells} with the main result of \cite{WWY14} (see also
\cite{DW15}
for arguments about extending the range of $\alpha$ from $(1,2]$ to $(1,\infty)$). 
\end{proof}

\begin{theorem}\label{thm:env-cell-qrc}
The strong converse quantum reading capacity of any environment-parametrized memory cell $\mc{E}_{\mc{X}}=\{\mc{N}^x_{B'\to B}\}_{x\in\mc{X}}$ with associated environment states $\{\theta^x_{E}\}_{x\in\mc{X}}$, as given in Definition~\ref{def:env-cell}, is bounded from above as
\begin{equation}
\widetilde{\mc{C}}(\mc{E}_{\mc{X}})\leq\max_{p_X}I(X;E)_\theta,
\end{equation} 
where  $\theta_{XE}$ is defined in \eqref{eq:cq-theta}.
\end{theorem}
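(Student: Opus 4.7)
The plan is to obtain this capacity bound as a straightforward asymptotic consequence of the non-asymptotic second-order bound already proved in Proposition~\ref{thm:env-cell-sec-order}. The heavy lifting, namely the adaptive-to-non-adaptive reduction in Proposition~\ref{thm:sim-red} and the key generalized-divergence estimate \eqref{eq:gen-div-bound-JP-cells}, has already been done; the capacity statement is essentially the $n \to \infty$ shadow of the finite-$n$ bound.

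First I would unpack Definition~\ref{def:capacity}: for any achievable $R$, any $\delta > 0$, and any fixed $\varepsilon \in (0,1)$, there exists an $(n, R-\delta, \varepsilon)$ quantum reading protocol for all sufficiently large $n$. Applying Proposition~\ref{thm:env-cell-sec-order} to each such protocol gives
\begin{equation}
R - \delta \leq \max_{p_X} I(X;E)_\theta + \sqrt{\frac{V_\varepsilon(\mc{E}_\mc{X})}{n}}\, \Phi^{-1}(\varepsilon) + O\!\(\frac{\log n}{n}\).
\end{equation}
Second I would send $n \to \infty$ with $\varepsilon$ and $\delta$ held fixed. Provided $V_\varepsilon(\mc{E}_\mc{X}) < \infty$, which holds automatically in the typical case that $|\mc{X}|$ is finite and the environment Hilbert space is finite-dimensional, both the $1/\sqrt{n}$ and $(\log n)/n$ correction terms vanish, leaving $R - \delta \leq \max_{p_X} I(X;E)_\theta$. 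Since $\delta > 0$ was arbitrary, $R \leq \max_{p_X} I(X;E)_\theta$, and taking the supremum over achievable $R$ yields the claim.

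As a sanity check, the same bound can be recovered directly from the strong-converse Proposition~\ref{thm:env-cell-strong} by letting $\alpha \to 1^{+}$ and invoking the convergence $\lim_{\alpha \to 1} \wt{I}_\alpha(X;E)_\theta = I(X;E)_\theta$ guaranteed by \eqref{eq:mono_renyi}: if $R > \max_{p_X} I(X;E)_\theta$, then by continuity we can choose $\alpha > 1$ close enough to $1$ that $R - \wt{I}_\alpha(\mc{E}_\mc{X}) > 0$, forcing $p_{\operatorname{succ}}$ to decay exponentially in $n$ and contradicting achievability. There is no real obstacle in the proof; the only subtlety worth flagging explicitly is ensuring that the error terms in Proposition~\ref{thm:env-cell-sec-order} are genuinely $o(1)$ in $n$, which reduces to finiteness of the relative entropy variance $V(\theta_{XE} \Vert \theta_X \otimes \theta_E)$ at the maximizing input distributions.
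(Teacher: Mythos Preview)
Your proposal is correct. The paper's own proof is exactly your ``sanity check'' route: it invokes Proposition~\ref{thm:env-cell-strong} and takes the limit $\alpha \to 1$, in one line. Your primary argument via the second-order bound of Proposition~\ref{thm:env-cell-sec-order} is an equally valid alternative; it has the minor cost of needing the finiteness caveat on $V_\varepsilon(\mc{E}_\mc{X})$ that you correctly flag, whereas the strong-converse route avoids that issue entirely (monotonicity in $\alpha$ from \eqref{eq:mono_sre} and the limit \eqref{eq:mono_renyi} already guarantee that $\wt{I}_\alpha(\mc{E}_\mc{X}) \to \max_{p_X} I(X;E)_\theta$ as $\alpha \downarrow 1$ without any variance hypothesis). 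Either way the conclusion is immediate from the earlier propositions.
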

\begin{proof}
The statement follows from Proposition~\ref{thm:env-cell-sec-order}, by taking the limit $n\to \infty$. Alternatively, 
the statement can also be concluded from Definition~\ref{def:capacity} and Proposition~\ref{thm:env-cell-strong}, by taking the limit $\alpha \to 1$.
\end{proof}

\bigskip
Direct consequences of the above theorems and Remark~\ref{rem:env-tel} are the following corollaries: 
\begin{corollary}\label{thm:rate-2nd}
For any $(n,R,\varepsilon)$ quantum reading protocol and jointly teleportation-simulable memory cell $\mc{T}_{\mc{X}}$ with
associated  resource states $\{\omega^x_{RB}\}_{x\in\mc{X}}$ as stated in Definition~\ref{def:tel-sim}, the reading rate $R$ is bounded from above as
\begin{equation}
R\leq\max_{p_X}I(X;RB)_{\omega}+\sqrt{\frac{V_\varepsilon(\mc{T}_{\mc{X}})}{n}}\Phi^{-1}(\varepsilon)+\mc{O}\(\frac{\log n}{n}\),
\end{equation}
where
\begin{equation}
\omega_{XRB}=\sum_{x\in\mc{X}}p_X(x)|x\>\<x|_X\otimes\omega^x_{RB}\label{eq:cq-resource-state}
\end{equation}
and
\begin{multline}
V_\varepsilon(\mc{T}_{\mc{X}})= \\
\left\{ 
\begin{tabular}{c c}
$\min_{p_X\in P({\mc{T}})} V\(\omega_{XRB}\V\omega_X\otimes\omega_{RB} \)$, & $\varepsilon \in  (0,1/2]$\\
$\max_{p_X\in P({\mc{T}})} V\(\omega_{XRB}\V\omega_X\otimes\omega_{RB} \)$,& $\varepsilon \in (1/2,1)$
\end{tabular}
\right.
\end{multline}
In the above, $P({\mc{T}})$ denotes a set $\{p_X\}$ of probability distributions  that are optimal for $\max_{p_X}I(X;RB)_{\omega}$.
\end{corollary}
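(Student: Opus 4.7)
The plan is to derive Corollary 1 as an immediate specialization of Proposition 1 via Remark 1. Concretely, recall that by Definition~\ref{def:tel-cell} a jointly teleportation-simulable memory cell $\mc{T}_{\mc{X}}=\{\mc{N}^x_{A\to B}\}_{x\in\mc{X}}$ comes with resource states $\{\omega^x_{RB}\}_{x\in\mc{X}}$ and an LOCC channel $\mc{L}_{ARB\to B}$ satisfying $\mc{N}^x_{A\to B}(\rho_A)=\mc{L}_{ARB\to B}(\rho_A\otimes\omega^x_{RB})$. Identifying the composite system $RB$ with the environment system $E$ in Definition~\ref{def:env-par}, setting $\theta^x_E:=\omega^x_{RB}$, and taking the interaction channel $\mc{F}_{AE\to B}:=\mc{L}_{ARB\to B}$ exhibits $\mc{T}_{\mc{X}}$ as an environment-parametrized memory cell, which is precisely the content of Remark~\ref{rem:env-tel}.

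With this identification in place, the next step is to invoke Proposition~\ref{thm:env-cell-sec-order} verbatim for the memory cell $\mc{T}_{\mc{X}}$. The classical-quantum state $\theta_{XE}$ appearing in \eqref{eq:cq-theta} is, under the identification $E\equiv RB$, exactly the state $\omega_{XRB}$ defined in \eqref{eq:cq-resource-state}. Consequently, the mutual information bound $\max_{p_X} I(X;E)_\theta$ becomes $\max_{p_X} I(X;RB)_\omega$, and the relative entropy variance $V(\theta_{XE}\|\theta_X\otimes\theta_E)$ becomes $V(\omega_{XRB}\|\omega_X\otimes\omega_{RB})$. The optimization set $P(\mc{E})$ becomes $P(\mc{T})$, the set of distributions $p_X$ achieving the maximum of the mutual information, and the case distinction on $\varepsilon\in(0,1/2]$ versus $\varepsilon\in(1/2,1)$ is transferred directly. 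Substituting everything yields the claimed inequality
\[
R\leq\max_{p_X}I(X;RB)_{\omega}+\sqrt{\frac{V_\varepsilon(\mc{T}_{\mc{X}})}{n}}\,\Phi^{-1}(\varepsilon)+\mc{O}\!\(\frac{\log n}{n}\).
\]

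I do not anticipate any real obstacle: the entire argument is a change of notation on top of Proposition~\ref{thm:env-cell-sec-order}. The only minor thing to check is that the second-order term from \cite{TT13} is invariant under enlarging the ``environment'' to include a reference system, but this is automatic because the bound depends only on the cq-state $\theta_{XE}$ as an abstract object and not on any internal tensor-product structure of $E$. Thus the proof of the corollary is one sentence invoking Remark~\ref{rem:env-tel} together with Proposition~\ref{thm:env-cell-sec-order}.
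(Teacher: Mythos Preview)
Your proposal is correct and matches the paper's approach exactly: the paper states this corollary as a direct consequence of Proposition~\ref{thm:env-cell-sec-order} together with Remark~\ref{rem:env-tel}, which is precisely the identification $E\equiv RB$, $\theta^x_E\equiv\omega^x_{RB}$, $\mc{F}_{AE\to B}\equiv\mc{L}_{ARB\to B}$ that you spell out.
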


\begin{corollary}
The strong converse quantum reading capacity of any jointly teleportation-simulable memory cell $\mc{T}_{\mc{X}}=\{\mc{N}^x_{B'\to B}\}_{x\in\mc{X}}$ associated with a set $\{\omega^x_{RB}\}$ of resource states is bounded from above as
\begin{equation}
\widetilde{\mc{C}}(\mc{T}_{\mc{X}})\leq\max_{p_X}I(X;RB)_\omega,
\end{equation} 
where
\begin{equation}\label{eq:omega-XRB}
\omega_{XRB}=\sum_{x\in\mc{X}}p_X(x)|x\>\<x|_X\otimes\omega^x_{RB}.
\end{equation}
\end{corollary}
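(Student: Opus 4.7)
The plan is to observe that this corollary is an immediate specialization of Theorem~\ref{thm:env-cell-qrc} through Remark~\ref{rem:env-tel}, so almost all of the substantive work has already been carried out in the preceding results, and only a short bookkeeping argument remains.

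First, I would unpack Definition~\ref{def:tel-cell}: for a jointly teleportation-simulable memory cell $\mc{T}_{\mc{X}}$, there exists a single LOCC channel $\mc{L}_{ARB\to B}$ and a family of resource states $\{\omega^x_{RB}\}_{x\in\mc{X}}$ such that
\begin{equation}
\mc{N}^x_{A\to B}(\rho_A) = \mc{L}_{ARB\to B}(\rho_A \otimes \omega^x_{RB}), \quad \forall x\in\mc{X}, \forall\rho_A.
\end{equation}
Comparing this with Definition~\ref{def:env-cell}, I would identify the environment system $E$ with the composite reference-output system $RB$, the environment states $\theta^x_E$ with the resource states $\omega^x_{RB}$, and the interaction channel $\mc{F}_{AE\to B}$ with the LOCC channel $\mc{L}_{ARB\to B}$. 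Under this identification, $\mc{T}_{\mc{X}}$ is an environment-parametrized memory cell, which is precisely the content of Remark~\ref{rem:env-tel}.

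Finally, I would invoke Theorem~\ref{thm:env-cell-qrc} applied to $\mc{T}_{\mc{X}}$ viewed in this way, obtaining
\begin{equation}
\mc{C}(\mc{T}_{\mc{X}}) \leq \max_{p_X} I(X;E)_\theta,
\end{equation}
and then substitute $E = RB$ and $\theta^x_E = \omega^x_{RB}$ so that $\theta_{XE}$ coincides with the classical-quantum state $\omega_{XRB}$ defined in \eqref{eq:omega-XRB}. This yields the claimed upper bound $\mc{C}(\mc{T}_{\mc{X}}) \leq \max_{p_X} I(X;RB)_\omega$. There is no real obstacle in this step; the technical content lies earlier, in Proposition~\ref{thm:sim-red} (the adaptive-to-non-adaptive reduction) and Proposition~\ref{thm:env-cell-strong} (the strong-converse bound whose $\alpha\to 1$ limit gives Theorem~\ref{thm:env-cell-qrc}), and the corollary is simply the specialization of those results to the teleportation-simulable setting by matching definitions.
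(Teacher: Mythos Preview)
Your proposal is correct and follows essentially the same approach as the paper, which simply states that the corollary is a direct consequence of Theorem~\ref{thm:env-cell-qrc} together with Remark~\ref{rem:env-tel}. Your explicit identification of $E$ with $RB$, of $\theta^x_E$ with $\omega^x_{RB}$, and of $\mc{F}_{AE\to B}$ with $\mc{L}_{ARB\to B}$ is exactly the bookkeeping implicit in the paper's one-line justification.
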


The capacity bounds given above are tight for a wide variety of channels, as clarified in the following remark:

\begin{remark}\label{rem:cov-cell-achievability}
The quantum reading capacity is achieved for a jointly teleportation-simulable memory cell 
$\mc{T}_{\mc{X}}$
when, for all $x\in\mc{X}$, $\omega^x_{RB}$ is equal to the Choi state of the channel $\mc{N}^x_{B'\to B}$. More precisely, the upper bound in Corollary~\ref{thm:rate-2nd} is achieved in such a case by invoking \cite[Theorem 4]{TT13}.
\end{remark}

\subsection{Weak converse bound for a non-adaptive reading protocol}

In this section, we establish a general weak converse when the strategy employed is non-adaptive.
Consider a state $\rho_{MRB'^{n}}$  of the form
\begin{equation}
\rho_{MRB'^n}=\frac{1}{|\mc{M}|}\sum_{m}|m\>\<m|_M\otimes\rho_{RB'^n}.
\end{equation}
Suppose that $\rho_{RB'^n}$ is purified by the pure state
$\psi_{RSB'^n}$.
Bob passes the transmitter state $\rho_{RB'^n}$ through a codeword sequence $\mathcal{N}^{x^n(m)}_{B'^n\to B^n}:= \bigotimes_{i=1}^n\mc{N}^{x_i(m)}_{B'_i\to B_i}$, where the choice $m$ depends on the classical value~$m$ in the register $M$. Let
\begin{equation}
\mathcal{U}^{\mc{N}^{x^n(m)}}_{B'^n\to B^n E^n} := \bigotimes_{i=1}^n\mathcal{U}^{\mc{N}^{x_i(m)}}_{B'_i\to B_i E_i} ,
\end{equation}
where $\mathcal{U}^{\mc{N}^{x_i(m)}}_{B'_i\to B_i E_i}$ denotes an isometric quantum channel extending ${\mc{N}}^{x_i(m)}_{B'_i\to B_i}$, for all $i\in \{1,2,\ldots n\}$. After the isometric channel acts, the overall state is as follows:
\begin{equation}
\sigma _{MRSB^nE^n}=
\frac{1}{|\mc{M}|}\sum_{m}|m\>\<m|_M\otimes
\mathcal{U}^{\mc{N}^{x^n(m)}}_{B'^n\to B^n E^n}\left(\psi_{RSB'^n}\right).
\end{equation}
Let $\sigma^\prime_{M\hat{M}}=\mc{D}_{RB^n\to \hat{M}}\(\sigma_{MRB^n}\)$ be the output state at the end of protocol after the decoding measurement $\mc{D}$ is performed by Bob. 
Let $\overline{\Phi}_{M\hat{M}}$ denote the maximally classically correlated state:
\begin{equation}\label{eq:cor-state}
\overline{\Phi}_{M\hat{M}}:=\frac{1}{|\mc{M}|}\sum_{m\in\mathcal{M}}|m\>\<m|_M\otimes|m\>\<m|_{\hat{M}}.
\end{equation}

\begin{proposition}
\label{thm:weak-converse-non-adaptive}
The non-adaptive reading capacity of any quantum memory cell $\mc{S}_{\mc{X}}=\{\mc{N}^x_{B'\to B}\}_{\mc{X}}$ is bounded from above as
\begin{equation}
\mc{C}_{\textnormal{non-adaptive}}(\mc{S}_{\mc{X}})\leq \sup_{p_X,\phi_{RB'}}I(XR;B)_\tau,
\end{equation}
where
\begin{align}
\tau_{XRB}&=\sum_{x}p_X(x)|x\>\<x|_X\otimes\mc{N}^x_{B'\to B}(\phi_{R B'}),
\end{align}
and it suffices for $\phi_{R B'}$ to be a pure state such that system $R$ is isomorphic to system $B'$. 
\end{proposition}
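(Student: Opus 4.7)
The plan is a Fano-then-single-letterize strategy. To begin, I would apply the AFW inequality from Lemma~\ref{thm:AFW} to the output classical-classical state $\sigma'_{M\hat M}$ (obtained by applying the decoding measurement to $\sigma_{MRB^n}$) and the maximally correlated state $\overline{\Phi}_{M\hat M}$ from \eqref{eq:cor-state}. The $(n,R,\varepsilon)$ hypothesis guarantees $\tfrac12\|\sigma'_{M\hat M}-\overline{\Phi}_{M\hat M}\|_1\leq\varepsilon$, and the lemma (in its classical-register form) gives $H(M|\hat M)_{\sigma'}\leq\varepsilon\log_2|\mc{M}|+g(\varepsilon)$, which rearranges to
\begin{equation}
(1-\varepsilon)\log_2|\mc{M}|\leq I(M;\hat M)_{\sigma'}+g(\varepsilon).
\end{equation}
Quantum data processing applied to the decoding measurement then gives $I(M;\hat M)_{\sigma'}\leq I(M;RB^n)_{\sigma}$.

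Next I would pass from $I(M;RB^n)$ to $I(X^n R; B^n)$. The inequality $I(M;RB^n)\leq I(MR;B^n)$ follows from $I(MR;B^n)=I(M;RB^n)+I(R;B^n)$ together with the fact that $R$ and $M$ are independent in the input state $\rho_{MRA^n}$. The equality $I(MR;B^n)=I(X^n R;B^n)$ follows because $X^n$ is a deterministic function of $M$ and $I(M;RB^n\mid X^n)=0$: conditioned on $X^n=x^n$, the $M$-register is in a product state with $\mc{N}^{x^n}_{A^n\to B^n}(\rho_{RA^n})$. A chain-rule expansion combined with the elementary identity $I(A;B\mid C)\leq I(AC;B)$ then yields
\begin{equation}
I(X^n R;B^n)_{\sigma}=\sum_{i=1}^n I(X^n R;B_i\mid B^{i-1})_\sigma\leq\sum_{i=1}^n I(X_i R^{(i)};B_i)_\sigma,
\end{equation}
where $R^{(i)}:=X^n_{\neq i}RB^{i-1}$.

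The main step---and the technical heart of the proof---is to bound each summand by $\sup_{p_X,\phi_{RA}}I(XR;B)_\tau$. Since the channels at positions $j\neq i$ are trace-preserving and act on systems disjoint from $A_i$, the marginal of $\sigma_{X_i R^{(i)} A_i}$ on $A_i$ is $\rho_{A_i}$ independently of the value of $X_i$. Choose any purification $\phi_{\tilde R A}$ of $\rho_{A_i}$; then for each $x_i$ the extension $\eta^{x_i}_{R^{(i)} A_i}:=\sigma_{R^{(i)} A_i\mid X_i=x_i}$ of $\rho_{A_i}$ is of the form $(\mc{E}^{x_i}_{\tilde R\to R^{(i)}}\otimes\id_A)(\phi_{\tilde R A})$ for some channel $\mc{E}^{x_i}$, by the standard fact that any two extensions of a common marginal are related by a channel on the purifying system. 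Consequently, $\sigma_{X_i R^{(i)} B_i}$ is the image of
\begin{equation}
\tau_{X_i \tilde R B_i}:=\sum_{x_i}p_{X_i}(x_i)|x_i\>\<x_i|_{X_i}\otimes\mc{N}^{x_i}(\phi_{\tilde R A})
\end{equation}
under the $X_i$-controlled channel $\rho\mapsto\sum_{x_i}(|x_i\>\<x_i|_{X_i}\rho|x_i\>\<x_i|_{X_i})\otimes\mc{E}^{x_i}(\cdot)$ acting on the $X_i\tilde R$ side, and data processing for mutual information then gives $I(X_i R^{(i)};B_i)_\sigma\leq I(X_i\tilde R;B_i)_\tau\leq\sup_{p_X,\phi_{RA}}I(XR;B)_\tau$. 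Summing over $i$, dividing by $n$, and taking $\varepsilon\to 0$, $n\to\infty$ yields the claimed bound; the restriction to pure $\phi_{RA}$ with $\dim\mc{H}_R=\dim\mc{H}_A$ follows from Schmidt-decomposing the purification and truncating the $R$-system to its support. The main obstacle is precisely this single-letterization, because in the actual protocol the letter $X_i$ may be correlated with $R^{(i)}$ while the proposition's $\tau$ has $X$ independent of $R$; the controlled-channel data-processing trick is what closes this gap.
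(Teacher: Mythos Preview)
Your argument is correct, but it follows a genuinely different route from the paper's. The paper purifies the transmitter state $\rho_{RA^n}$ with an ancilla $S$ and replaces each channel by an isometric extension $\mc{U}^{\mc{N}^{x_i(m)}}_{A_i\to B_iE_i}$; the single-letterization then proceeds via the duality of conditional entropy, rewriting $-H(B^n|RSM)$ as $+H(B^n|E^nM)$, applying subadditivity termwise, and undoing the duality to obtain $\sum_i I(MA_{[n]\setminus\{i\}};B_i|RS)$. This is the classic entanglement-assisted converse trick, and it delivers a single-letter term in which the reference is the \emph{pure} state $\psi_{RSA_{[n]\setminus\{i\}}A_i}$, so purity of $\phi_{RA}$ in the final bound is automatic.

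By contrast you work directly with the codeword register $X^n$ (bypassing $M$ via the Markov argument), chain-rule on $B^n$, and then---this is the key idea that replaces duality---observe that the conditional state $\sigma_{R^{(i)}A_i\mid X_i=x_i}$ is an extension of the fixed marginal $\rho_{A_i}$, hence obtainable from a fixed purification $\phi_{\tilde R A_i}$ by a channel $\mc{E}^{x_i}$ on the reference alone. The $X_i$-controlled application of these $\mc{E}^{x_i}$ is a bona fide channel on the $X_i\tilde R$ side, and data processing then collapses the awkward state (with $X_i$ correlated with $R^{(i)}$) to the product-form $\tau$. This step is exactly what justifies the ``$X$ independent of $R$'' structure in the proposition, which the paper obtains only implicitly. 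Your approach avoids Stinespring dilations and entropy duality altogether, and in that sense is more elementary; the paper's approach has the virtue of making the connection to the entanglement-assisted capacity formula (and its standard proof technique) transparent.
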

\begin{proof}
For any $(n,R,\varepsilon)$ quantum reading protocol using a non-adaptive strategy, we have that
\begin{equation}
\frac{1}{2}\left\Vert \overline{\Phi}_{M\hat{M}}-\sigma^\prime_{M\hat{M}}\right\Vert_1\leq \varepsilon. 
\end{equation}
Then consider the following chain of inequalities:
\begin{align}
& \log_2|\mc{M}|\notag \\
&=I(M;\hat{M})_{\overline{\Phi}}\\
& \leq I(M;\hat{M})_{\sigma^\prime}+f(n,\varepsilon)\\
& \leq I(M;RSB^n)_\sigma + f(n,\varepsilon)\\
& = I(M;RS)_\sigma +I(M;B^n|RS)_\sigma + f(n,\varepsilon)\\
&=I(M;B^n|RS)_\sigma + f(n,\varepsilon)\\
&=H(B^n|RS)_\sigma-H(B^n|RSM)_\sigma + f(n,\varepsilon)\\
&=H(B^n|RS)_\sigma+H(B^n|E^nM)_\sigma + f(n,\varepsilon) \label{eq:converse-1st-block-last-step}
\end{align}
The first inequality follows from the uniform continuity of conditional entropy \cite{AF04,Win16}, where $f(n,\varepsilon)$ is a function of $n$ and the error probability $\varepsilon$ such that $\lim_{\varepsilon\to 0}\lim_{n\to \infty}\frac{f(n,\varepsilon)}{n}=0$ \cite{W15book}. The second inequality follows from data processing. The second equality follows from the chain rule for the mutual information. The third equality follows because the reduced state of systems $M$ and $RS$ is a product state. The fifth equality follows from the duality of the conditional entropy. 
Continuing, we have that
\begin{align}
& \text{Eq. } \eqref{eq:converse-1st-block-last-step} \notag \\
&\leq \sum_{i=1}^n \left[H(B_i|RS)_\sigma +H(B_i|E_iM)_\sigma\right]+ f(n,\varepsilon)\\
&= \sum_{i=1}^n\left[ H(B_i|RS)_\sigma -H(B_i|RSB'_{[n]\setminus \{i\}}M)_\sigma \right]+ f(n,\varepsilon)\\
&=\sum_{i=1}^nI(MB'_{[n]\setminus \{i\}};B_i|RS)_\sigma + f(n,\varepsilon)\\
&\leq \sum_{i=1}^nI(MB'_{[n]\setminus \{i\}}RS;B_i)_\sigma+ f(n,\varepsilon)\\
&= nI(MR^\prime;B|Q)_\sigma+ f(n,\varepsilon)\\
&\leq n\sup_{p_X,\phi_{\wt{R} B'}}I(X\wt{R};B)_\tau+ f(n,\varepsilon).
\end{align}
The first inequality follows from strong subadditivity of quantum entropy \cite{LR73,PhysRevLett.30.434}. The final inequality follows because the average can never exceed the maximum.
In the above, $B'_{[n]\setminus \{i\} }$ denotes the joint system $B'_1{B'}_2\cdots B'_{i-1}{B'}_{i+1}\cdots B'_n$, such that  system $B'_i$ is excluded. Furthermore,
\begin{multline}
\sigma_{MQR^\prime B} = \\
\frac{1}{|\mc{M}|}\frac{1}{n}\sum_{m=1}^{|\mc{M}|}\sum_{i=1}^n|m\>\<m|_M\otimes|i\>\<i|_Q\otimes\mc{N}^{x_i(m)}_{B'_i\to B_i}(\sigma_{R S B'_i B'_{[n]\setminus{i}}}) ,
\end{multline}
where we have introduced an auxiliary classical register $Q$, and  $R^\prime := RSB'_{[n]\setminus i}$. Also,
\begin{equation}\label{eq:cq-phi}
\tau_{X\wt{R} B}=\sum_{x}p_X(x)|x\>\<x|_X\otimes\mc{N}^x_{B'\to B}(\phi_{\wt{R} B'}).
\end{equation}

Now we argue that it is sufficient to take $\phi_{\wt{R}B'}$ to be a pure state. Suppose that $\phi_{\hat{R}B'}$ is a mixed state and let $R^{\prime\prime}$ be a purifying system for it. Then by the data-processing inequality, we have that
\begin{equation}
I(X\hat{R};B)_\tau \leq I(X\hat{R}R^{\prime\prime};B)_\tau,
\end{equation}
where $\tau_{X\hat{R}R''B}$ is a state of the form in \eqref{eq:cq-phi}.
The statement in the theorem about the  reference system  follows from the Schmidt decomposition and the fact that the reference system purifies the system $B'$ being input to the channel.
\end{proof}

\subsection{Weak converse bound for a quantum reading protocol}

In this section, we establish a general weak converse bound for the quantum reading capacity of an arbitrary memory cell.

\begin{theorem}\label{thm:q-r-capacity}
The quantum reading capacity of a quantum memory cell $\mc{S}_{\mc{X}}=\{\mc{N}^x_{B' \to B}\}_{\mc{X}}$ is  bounded from above as
\begin{equation}
\mathcal{C}(\mc{S}_{\mc{X}})\leq \sup_{\rho_{XR{B'}}}\left[I(X;B|R)_\omega-I(X;{B'}|R)_\rho\right],
\label{eq:reading-cap-up-bnd}
\end{equation}
where
\begin{align}
\omega_{XRB}&=\sum_{x}p_X(x)|x\>\<x|_X\otimes\mc{N}^x_{{B'}\to B}(\rho^x_{R{B'}}),\\
\rho_{XR{B'}}&=\sum_{x}p_X(x)|x\>\<x|_X\otimes\rho^x_{R{B'}},
\end{align}
and the dimension of the Hilbert space $\mc{H}_R$ can be unbounded. 
\end{theorem}

\begin{remark}
We should clarify that the upper bound
in \eqref{eq:reading-cap-up-bnd}
is non-negative. A particular choice of
the input state $\rho_{XR{B'}}$ is
\begin{equation}
\rho_{XR{B'}} =
\sum_{x}p_X(x)|x\>\<x|_X\otimes\rho_{R{B'}}.
\end{equation}
Then in this case,
\begin{align}
I(X;B|R)_\omega-I(X;{B'}|R)_\rho & = I(X;RB)_\omega-I(X;R{B'})_\rho \notag \\
 & = I(X;RB)_\omega \geq 0,
\end{align}
with $\omega_{XRB}=\sum_{x}p_X(x)|x\>\<x|_X\otimes\mc{N}^x_{{B'}\to B}(\rho_{R{B'}})$. Thus, we can conclude that
\begin{equation}
\sup_{\rho_{XR{B'}}}\left[I(X;B|R)_\omega-I(X;{B'}|R)_\rho\right] \geq 0.
\end{equation}
\end{remark}

\begin{proof}[Proof of Theorem~\ref{thm:q-r-capacity}]
For any $(n,R,\varepsilon)$ quantum reading protocol as stated in Definition~\ref{def:QR}, we have
\begin{equation}
\frac{1}{2}\left\Vert  \overline{\Phi}_{M\hat{M}}-\sigma^\prime_{M\hat{M}}\right\Vert_1\leq\varepsilon,
\end{equation}
where $\overline{\Phi}_{M\hat{M}}$ is a maximally classically correlated state as in \eqref{eq:cor-state} and 
\begin{equation}
\sigma^\prime_{M\hat{M}}=\mc{D}_{R_nB_n\to\hat{M}}\(\sigma^n_{MR_nB_n}\)
\end{equation}
is the output state at the end of the protocol after Bob performs the final decoding measurement. 
We denote the input state before the $i$th call of the channel  as 
\begin{multline}
\rho^{i}_{M R_i B'_i}
=\frac{1}{|\mc{M}|}\sum_{m\in\mc{M}}|m\>\<m|_M\otimes\mc{A}^{(i-1)}_{R_{i-1}B_{i-1}\to R_iB'_i}\circ\\
\mc{N}^{x_{i-1}(m)}_{B'_{i-1}\to B_{i-1}}\circ\cdots
\cdots\circ\mc{N}^{x_2(m)}_{B'_2\to B_2}\circ \\
\mc{A}^{(1)}_{R_{1}B_{1}\to R_2B'_2}\circ\mc{N}^{x_{1}(m)}_{B'_{1}\to B_{1}}(\rho_{R_1B'_1})\label{eq:state-i-pre-use} ,
\end{multline}
and we denote the output state after the $i$th call of the channel  as
\begin{multline}
\omega^{i}_{M R_i B_i}
=\frac{1}{|\mc{M}|}\sum_{m\in\mc{M}}|m\>\<m|_M\otimes\mc{N}^{x_i(m)}_{B'_i\to B_i}\circ\\
\mc{A}^{(i-1)}_{R_{i-1}B_{i-1}\to R_iB'_i}\circ
\mc{N}^{x_{i-1}(m)}_{B'_{i-1}\to B_{i-1}}\circ\cdots
\cdots\circ\mc{N}^{x_2(m)}_{B'_2\to B_2}\circ\\
\mc{A}^{(1)}_{R_{1}B_{1}\to R_2B'_2}\circ\mc{N}^{x_{1}(m)}_{B'_{1}\to B_{1}}(\rho_{R_1B'_1}).\label{eq:state-i-use}
\end{multline}
 
The initial part of our proof follows steps similar to those in the proof of Proposition~\ref{thm:weak-converse-non-adaptive}.
\begin{align}
& \log_2|\mc{M}| \notag \\
&= I(M;\hat{M})_{\overline{\Phi}}\\
& \leq I(M;\hat{M})_{\sigma^\prime}+f(n,\varepsilon)\\
& \leq I(M;R_n B_n)_{\sigma^n}+f(n,\varepsilon)\\
& = I(M;R_nB_n)_{\omega^n}-I(M;R_1B'_1)_{\rho^1}+f(n,\varepsilon)\\
&= I(M;R_nB_n)_{\omega^n}-I(M;R_nB'_n)_{\rho^n}
\notag \\
& \qquad + I(M;R_nB'_n)_{\rho^n} -I(M;R_{n-1}B'_{n-1})_{\rho^{n-1}}\nonumber\\ &\qquad+I(M;R_{n-1}B'_{n-1})_{\rho^{n-1}}-\cdots-I(M;R_2B'_2)_{\rho^2}\nonumber\\
& \qquad+I(M;R_2B'_2)_{\rho^2}-I(M;R_1B'_1)_{\rho^1}+f(n,\varepsilon)\\
&\leq I(M;R_nB_n)_{\omega^n}-I(M;R_nB'_n)_{\rho^n} \notag \\& \qquad +I(M;R_{n-1}B_{n-1})_{\omega^{n-1}} -I(M;R_{n-1}B'_{n-1})_{\rho^{n-1}}\nonumber\\ &\qquad+I(M;R_{n-2}B_{n-2})_{\omega^{n-2}}-\cdots-I(M;R_2B'_2)_{\rho^2}\nonumber \\
& \qquad +I(M;R_1B_1)_{\omega^1}-I(M;R_1B'_1)_{\rho^1}+f(n,\varepsilon) \label{eq:first-block-wk-conv-adaptive}
\end{align}
The second equality follows because the state $\rho^1$ is product between systems $M$ and $R_1B'_1$. The third equality follows by  adding and subtracting  equal information quantities.   
The third inequality follows from the data-processing inequality: mutual information is non-increasing under the local action of quantum channels. Continuing, we have that
\begin{align}
& \text{Eq. } \eqref{eq:first-block-wk-conv-adaptive} \notag \\
&= \sum_{i=1}^n\left[I(M;R_iB_i)_{\omega^i}-I(M;R_iB'_i)_{\rho^i}\right]+f(n,\varepsilon)\\
& = \sum_{i=1}^n\left[I(M;B_i|R_i)_{\omega^i}-I(M;B'_i|R_i)_{\rho^i}\right]+f(n,\varepsilon)\\
& = n \left[I(M;B|RQ)_{\overline{\omega}}-I(M;{B'}|RQ)_{\overline{\rho}}\right]+f(n,\varepsilon)\\
&\leq n\sup_{\rho_{XR{B'}}}\left[I(X;B|R)_\omega-I(X;{B'}|R)_\rho\right]+f(n,\varepsilon),
\end{align}
The second equality follows from the chain rule for conditional mutual information. The third equality follows by defining the following states:
\begin{align}
\overline{\omega}_{QMRB} & =
\sum_{i=1}^n \frac{1}{n} | i\> \< i|_Q \otimes 
\omega^{i}_{M R_i B_i}, \\ 
\overline{\rho}_{QMR{B'}} & =
\sum_{i=1}^n \frac{1}{n} | i\> \< i|_Q \otimes 
\rho^{i}_{M R_i B'_i}.
\end{align}
The final inequality follows by defining the following states:
\begin{align}
\omega_{XRB}&=\sum_{x}p_X(x)|x\>\<x|_X\otimes\mc{N}^x_{{B'}\to B}(\rho^x_{R{B'}}),\\
\rho_{XR{B'}}&=\sum_{x}p_X(x)|x\>\<x|_X\otimes\rho^x_{R{B'}},
\end{align}
and realizing that the states $\overline{\omega}_{QMRB}$ and $\overline{\rho}_{QMR{B'}}$ are particular examples of the states $\omega_{XRB}$ and $\rho_{XR{B'}}$, respectively, with the identifications $M \to X$ and $QR \to R$.
Putting everything together, we find that
\begin{multline}
\frac{1}{n}\log_2|\mc{M}| \leq 
\sup_{\rho_{XR{B'}}}\left[I(X;B|R)_\omega-I(X;{B'}|R)_\rho\right]\\
+\frac{1}{n} f(n,\varepsilon)
\end{multline}
Taking the limit as $n \to \infty$ and then as $\varepsilon \to 0$ concludes the proof.
\end{proof}

\bigskip
Now we develop a general upper bound on the energy-constrained quantum reading capacity of a beamsplitter memory cell $\mc{B}_{\mc{X}}=\{\mc{B}^{x}\}_{x\in\mc{X}}$, where $x\in\mc{X}$ represents the transmissivity $\eta$ and phase $\phi$ of the beamsplitter $\mc{B}^x$~\cite[Eqns.~(5)--(6)]{KMN+05}. This bound has implications for the reading protocols considered in \cite{WGTL12}. 

Let $\hat{H}$ denote the familiar $\hat{a}^\dag \hat{a}$ number observable
and let $N_S \in [0,\infty)$. The energy-constrained reading capacity
$\mc{C}(\mc{B}_{\mc{X}},\hat{H}, N_S)$
of a beamsplitter memory cell $\mc{B}_{\mc{X}}$ is defined in the obvious way, such that the average energy of the input to each call of the memory is bounded from above by $N_S \geq 0$. This definition implies that the function we optimize in the  capacity upper bound has the following constraint: for any input ensemble $\{p_X(x),\rho^x_{R{B'}}\}$:
\begin{equation}
\Tr\left\{\hat{H}\int p_X(x)\rho^x_{{B'}}\right\}\leq N_{S}.
\end{equation}
Since the energy of the output state of $\mc{B}^x$ does not depend on the phase $\phi$, we can drop the dependence of $x$ on $\phi$ and take $x=\eta$ for our discussion. 
For a memory cell $\mc{B}_{\mc{X}}$, the energy of the output state is constrained as
\begin{align}
& \Tr\left\{\sum_{x\in\mc{X}}p_X(x)\mc{B}^x(\rho^x_{B'})\hat{H}\right\} \notag \\
&=\sum_{x\in\mc{X}}p_X(x)\Tr\left\{\mc{B}^x(\rho^x_{B'})\hat{H}\right\}\\
&=\sum_{x\in\mc{X}}p_X(x)\eta \Tr\left\{\rho^x_{B'}\hat{H}\right\}\\
&\leq N_S,
\end{align}
where the second equality holds because the transmissivity of each $\mc{B}^x$ is $\eta \in [0,1]$.

We can then state the following result:
\begin{corollary}
The energy-constrained reading capacity of a beamsplitter memory cell $\mc{B}_{\mc{X}}=\{\mc{B}^x\}_{x\in\mc{X}}$ is bounded from above as
\begin{equation}
\mc{C}(\mc{B}_{\mc{X}},\hat{H}, N_S)\leq 2 g(N_S),
\end{equation}
where $\theta^{N_S}$ is a thermal state~\eqref{eq:thermal-state} such that $\Tr\{ \hat{H} \theta^{N_S}\} = N_S$ and $g(y):=(y+1)\log(y+1)-y\log y$.
\end{corollary}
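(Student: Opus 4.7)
The plan is to invoke the general weak converse of Theorem~\ref{thm:q-r-capacity} in its energy-constrained form and then strip the resulting conditional-mutual-information expression down to a single-letter output entropy, which will finally be controlled by the maximum-entropy property of thermal states at fixed mean photon number.

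First I would observe that the derivation in the proof of Theorem~\ref{thm:q-r-capacity} goes through verbatim when one restricts to codes whose average channel input obeys $\Tr\{\hat{H}\sum_x p_X(x)\rho^x_A\}\leq N_S$, since every channel call in the adaptive protocol is energy-limited under the definition of $\mc{C}(\mc{B}_{\mc{X}},\hat{H},N_S)$. This yields
\begin{equation*}
\mc{C}(\mc{B}_{\mc{X}},\hat{H},N_S)\leq \sup_{\rho_{XRA}}\left[I(X;B|R)_\omega - I(X;A|R)_\rho\right],
\end{equation*}
with $\omega_{XRB}$ and $\rho_{XRA}$ as in the theorem and the supremum restricted to ensembles satisfying the input energy constraint.

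Next, since $I(X;A|R)_\rho\geq 0$, I would drop the second term. Then, from the chain rule $I(XR;B)_\omega = I(R;B)_\omega + I(X;B|R)_\omega$ and the non-negativity of mutual information, $I(X;B|R)_\omega\leq I(XR;B)_\omega$. Writing $I(XR;B)_\omega = H(B)_\omega - H(B|XR)_\omega$ and applying the Araki--Lieb inequality $H(B|XR)_\omega \geq -H(B)_\omega$ gives $I(XR;B)_\omega\leq 2H(B)_\omega$. Thus the upper bound reduces to $2\sup H(B)_\omega$.

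Finally, I would control $H(B)_\omega$ via the output energy. Because each $\mc{B}^x$ has transmissivity $\eta\in[0,1]$, the computation already displayed in the excerpt shows $\Tr\{\hat{H}\omega_B\}\leq N_S$ whenever the input constraint holds. The maximum-entropy principle for a single bosonic mode then gives $H(B)_\omega \leq g(\Tr\{\hat{H}\omega_B\})\leq g(N_S)$, using monotonicity of $g$, with equality only for the thermal state $\theta^{N_S}$ characterized by $\Tr\{\hat{H}\theta^{N_S}\}=N_S$. Combining the three steps yields $\mc{C}(\mc{B}_{\mc{X}},\hat{H},N_S)\leq 2g(N_S)$. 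The one delicate point is Step~1: making sure the energy constraint genuinely descends from the operational definition to each state appearing in the telescoping sum in the proof of Theorem~\ref{thm:q-r-capacity}, so that the supremum in the single-letter bound inherits the constraint in the proper single-use form.
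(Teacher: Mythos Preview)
Your proposal is correct and follows essentially the same route as the paper: invoke the energy-constrained extension of Theorem~\ref{thm:q-r-capacity}, drop the non-negative $I(X;A|R)_\rho$, bound the remaining conditional mutual information by $2H(B)_\omega$, and then appeal to the maximum-entropy property of thermal states together with the output-energy calculation already displayed before the corollary. The only cosmetic difference is that the paper states the inequality $I(X;B|R)_\omega\leq 2H(B)_\omega$ directly as a ``standard entropy bound,'' whereas you pass through $I(X;B|R)_\omega\leq I(XR;B)_\omega\leq 2H(B)_\omega$ via the chain rule and Araki--Lieb; both arguments are equivalent, and your flagged caveat about the energy constraint descending to the single-letter supremum is exactly the ``straightforward extension'' the paper invokes without further comment.
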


\begin{proof}
From a straightforward extension of  Theorem~\ref{thm:q-r-capacity}, which takes into account the energy constraint, we find that 
\begin{align}
& \mc{C}(\mc{S}_{\mc{X}},\hat{H}, N_S) \notag \\
&\leq \sup_{\{p_X(x),\rho^x_{R{B'}}\} :\atop  \mathbb{E}_X\{\Tr\{\hat{H} \rho^X_{{B'}}\}\} \leq N_S }I(X;B|R)_\omega - I(X;{B'}|R)_\rho\\
& \leq \sup_{\{p_X(x),\rho^x_{R{B'}}\} :  \atop\mathbb{E}_X\{\Tr\{\hat{H} \rho^X_{{B'}}\}\} \leq N_S }I(X;B|R)_\rho\\
&\leq \sup_{\{p_X(x),\rho^x_{R{B'}}\}
 : \atop \mathbb{E}_X\{\Tr\{\hat{H} \rho^X_{{B'}}\}\} \leq N_S} 2H(B)_\rho\\
&\leq 2 H(\theta^{N_S})\\
&= 2g(N_S).
\end{align}
The first inequality follows from the extension of Theorem~\ref{thm:q-r-capacity}. The second inequality follows from non-negativity of the conditional quantum mutual information. The third inequality follows from a standard entropy bound for the conditional quantum mutual information. 
The fourth inequality follows because the thermal state of mean energy $N_S$ has the maximum entropy under a fixed energy constraint (see, e.g., \cite{Carlen09}). The final equality follows because the observable $\hat{H}$ is the familiar $\hat{a}^\dag \hat{a}$ number observable, for which the entropy of its thermal state of mean photon number $N_S$ is given by $g(N_S)$.
\end{proof}

\begin{remark}
It follows  that
\begin{equation}
\mc{C}_{\textnormal{non-adaptive}}(\mc{B}_{\mc{X}},\hat{H}, N_S)\leq
2g(N_S)
\end{equation}
 because
\begin{equation}
\mc{C}_{\textnormal{non-adaptive}}(\mc{B}_{\mc{X}},\hat{H}, N_S)\leq  \mc{C}(\mc{B}_{\mc{X}},\hat{H}, N_S),
\end{equation}
 by the definition of the energy-constrained quantum reading capacity of a memory cell~$\mc{B}_{\mc{X}}$. 
\end{remark}

\section{Examples of environment-parametrized memory cells}
\label{sec:example}

In this section, we calculate the quantum reading capacities of several environment-parametrized memory cells, including a thermal memory cell, and a jointly covariant memory cell formed from a channel $\mc{N}$ and a group $G$ with respect to which $\mc{N}$ is covariant. Examples of such a jointly covariant memory cell include qudit erasure and depolarizing memory cells formed respectively from erasure and depolarizing channels. 

\subsection{Jointly covariant memory cell: $\mc{N}^{\textnormal{cov}}_G$}

In this section, we show that the quantum reading capacity of a memory cell $\mc{N}^{\textnormal{cov}}_G$  as given in Definition~\ref{def:N-G-cell} below is equal to the entanglement-assisted classical capacity of the underlying channel $\mc{N}$. Our result makes use of the fact that the entanglement-assisted classical capacity of a covariant channel $\mc{T}$ is equal to $I(R;B)_{\mc{T}(\Phi)}$ \cite{BSST99,BSST02}. Furthermore, we use this result to evaluate the quantum reading capacity of a qudit erasure memory cell (Definition~\ref{def:qudit-erasure}) and a qudit depolarizing memory cell (Definition~\ref{def:qudit-dep}). 

\begin{definition}[$\mc{N}^{\textnormal{cov}}_{G}$]\label{def:N-G-cell}
Let $\mc{N}$ be a covariant channel with respect to a group $G$ as in Definition~\ref{def:covariant}. We define the memory cell $\mc{N}^{\textnormal{cov}}_G$ as
\begin{equation}
\mc{N}^{\textnormal{cov}}_G=\left\{\mc{N}_{{B'}\to B}\circ\mc{U}^g_{B'}\right\}_{g\in G},
\end{equation}
where $\mc{U}_{B'}^g:=U_{B'}(g)(\cdot)U^\dag_{B'}(g)$.  It follows from~\eqref{eq:cov-condition} that 
\begin{equation}\label{eq:con-cov-2}
\mc{N}_{{B'}\to B}\circ\mc{U}^g_{B'}= \mc{V}^g_B\circ\mc{N}_{{B'}\to B},
\end{equation}
where $\mc{V}_B^g:=V_B(g)(\cdot)V^\dag_B(g)$. It also follows that $\mc{N}^{\textnormal{cov}}_G$ is a  jointly covariant memory cell.
\end{definition}

\begin{theorem}\label{thm:covariant-to-EA-cap}
The quantum reading capacity $\mc{C}(\mc{N}^\textnormal{cov}_G)$ of the jointly covariant memory cell $\mc{N}^{\textnormal{cov}}_G=\left\{\mc{N}_{{B'}\to B}\circ\mc{U}^g_{B'}\right\}_{g\in G}$, as in Definition~\ref{def:N-G-cell}, is equal to the entanglement-assisted classical capacity of~$\mc{N}$:
\begin{equation}
\mc{C}(\mc{N}^\textnormal{cov}_G) = I(R;B)_{\mc{N}(\Phi)},
\end{equation}
where $\mc{N}(\Phi):=\mc{N}_{{B'}\to B}(\Phi_{R{B'}})$ and $\Phi_{R{B'}}\in\mc{D}(\mc{H}_{R{B'}})$ is a maximally entangled state. 
\end{theorem}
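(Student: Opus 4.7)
My plan is to prove a matching converse and achievability, both of which are organized around the classical-quantum ensemble $\{p_G(g),\omega^g_{RB}\}_{g\in G}$ where $\omega^g_{RB}:=(\mc{N}_{A\to B}\circ\mc{U}^g_A)(\Phi_{RA})$. Using the covariance identity~\eqref{eq:con-cov-2}, one rewrites
\begin{equation}
\omega^g_{RB}=(I_R\otimes V_B(g))\,\mc{N}_{A\to B}(\Phi_{RA})\,(I_R\otimes V_B(g))^\dagger,
\end{equation}
so that each $\omega^g_{RB}$ is unitarily equivalent to the Choi state of $\mc{N}$; in particular $H(\omega^g_{RB})=H(\mc{N}(\Phi)_{RB})$ for every $g\in G$.

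For the converse, I would invoke Remark~\ref{rem:env-tel} together with the proposition preceding it to conclude that $\mc{N}^{\textnormal{cov}}_G$ is jointly teleportation-simulable, hence environment-parametrized with environment states $\{\omega^g_{RB}\}_{g\in G}$, and then apply the capacity corollary following Theorem~\ref{thm:env-cell-qrc} to obtain $\mc{C}(\mc{N}^{\textnormal{cov}}_G)\leq\max_{p_G}I(X;RB)_\omega$ with $\omega_{XRB}=\sum_g p_G(g)|g\>\<g|_X\otimes\omega^g_{RB}$. To evaluate this single-letter bound, I would note two properties of the map $p_G\mapsto I(X;RB)_\omega = H(\omega_{RB})-H(\mc{N}(\Phi)_{RB})$: it is concave in $p_G$ (by concavity of von Neumann entropy, using that $H(RB|X)_\omega$ is constant by the unitary equivalence above), and it is invariant under left translations $p_G(\cdot)\mapsto p_G(g_0^{-1}\cdot)$, since such a shift merely conjugates $\omega_{RB}$ by $I_R\otimes V_B(g_0)$. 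These two facts pin the maximizer to the uniform distribution on $G$. Finally, for the uniform $p_G$, the Holevo quantity $I(X;RB)_\omega$ is precisely the rate achieved by the Bennett--Shor--Smolin--Thapliyal entanglement-assisted coding protocol applied to $\mc{N}$~\cite{BSST99,BSST02}, which for covariant $\mc{N}$ equals $C_{EA}(\mc{N}) = I(R;B)_{\mc{N}(\Phi)}$. This chain yields $\mc{C}(\mc{N}^{\textnormal{cov}}_G)\leq I(R;B)_{\mc{N}(\Phi)}$.

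For achievability, I would exhibit a non-adaptive reading protocol that saturates the bound: the encoder draws codewords $g^n(m)$ i.i.d.\ uniformly from $G$, while the reader uses the product transmitter state $\Phi_{RA}^{\otimes n}$ together with an HSW-type sequential/square-root joint measurement on $R^nB^n$. Applying the HSW theorem to the induced cq-channel $g\mapsto\omega^g_{RB}$ with the uniform input shows that every rate below its Holevo information is achievable, and by the identification above this Holevo information equals $I(R;B)_{\mc{N}(\Phi)}$; combining with the converse gives the stated equality. The main obstacle I anticipate is precisely that identification: attempting to compute $H(\omega_{RB})$ from scratch would require grinding through an isometric-covariance twirl of $\mc{N}(\Phi)_{RB}$ via~\eqref{eq:iso-covariant} and controlling the resulting marginal entropies, which is not transparent. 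The cleanest route is to invoke the Bennett--Shor--Smolin--Thapliyal analysis of covariant channels~\cite{BSST99,BSST02}, which was tailored exactly to this situation and renders the equality $\max_{p_G}I(X;RB)_\omega=I(R;B)_{\mc{N}(\Phi)}$ immediate.
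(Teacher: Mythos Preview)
Your proposal is correct and follows essentially the same two-part structure as the paper: reduce to the single-letter Holevo quantity $\max_{p_G}I(X;RB)_\omega$ via joint teleportation-simulability, then identify this maximum with $I(R;B)_{\mc{N}(\Phi)}$.

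The only substantive difference is in how that identification is carried out. You argue abstractly that concavity plus left-translation invariance force the maximizer to be the uniform distribution, and then outsource the evaluation at the uniform $p_G$ to the BSST analysis~\cite{BSST99,BSST02}. The paper instead performs the whole calculation directly: it inserts a twirl over $G$ (which is the same averaging over left translations), applies concavity of entropy, and then uses the unitary one-design property of Definition~\ref{def:covariant} to collapse $\frac{1}{|G|}\sum_g\mc{U}^g_A(\Phi_{RA})$ to $\pi_R\otimes\pi_A$, obtaining $H(\pi_R)+H(\mc{N}(\pi_A))-H(\mc{N}(\Phi)_{RB})=I(R;B)_{\mc{N}(\Phi)}$ in three lines. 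This self-contained computation is precisely the ``grinding'' you anticipated as an obstacle, but it is in fact short and transparent once you use the one-design hypothesis; you may wish to include it rather than appeal to BSST, since it makes the proof independent of external capacity results.
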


\begin{proof}
Our proof consists of two parts: the converse part and the achievability part. We first show the converse part:
\begin{equation}
\mc{C}\(\mc{N}^\textnormal{cov}_G\) \leq I(R;B)_{\mc{N}(\Phi)}. 
\end{equation}
From Remark~\ref{rem:cov-cell-achievability}, we conclude that the quantum reading capacity of $\mc{N}^\textnormal{cov}_G$ is as follows:
\begin{equation}
\mc{C}\(\mc{N}^\textnormal{cov}_G\)=\max_{p_G}I(G;RB)_{\omega},
\end{equation}
where
\begin{equation}
\omega_{GRB}:=\sum_{g\in G}p_G(g)|g\>\<g|_G\otimes\omega^{g}_{RB},
\end{equation}
such that $\{|g\>\}_{g\in G}$ forms an orthonormal basis on $\mc{H}_{G}$ and 
\begin{equation}
\forall g\in G:\ \omega^g_{RB}=(\mc{N}_{{B'}\to B}\circ\mc{U}^g_{B'})(\Phi_{R{B'}}).
\end{equation}
Let us fix $p_G$. Then
\begin{align}
& I(G;RB)_{\omega} \notag \\
&=H\!\(\sum_{g\in G}p_G(g)\omega^g_{RB}\)
 -\sum_{g\in G}p_G(g)H(\omega^{g}_{RB})\\
&=H\!\(\sum_{g\in G}p_G(g)(\mc{V}^g_B\circ\mc{N}_{{B'}\to B})(\Phi_{R{B'}})\)\notag \\
& \qquad -\sum_{g\in G}p_G(g)H((\mc{V}^g_B\circ\mc{N}_{{B'}\to B})(\Phi_{R{B'}}))\label{eq:observe-cov-uni}\\
&=\sum_{g^\prime\in G}\frac{1}{|G|}H\!\(\sum_{g\in G}p_G(g)(\mc{V}^{g^\prime}_B\circ\mc{V}^g_B\circ\mc{N}_{{B'}\to B})(\Phi_{R{B'}})\)\notag\\
& \qquad -H(\mc{N}_{{B'}\to B}(\Phi_{R{B'}}))\\
&\leq H\!\(\frac{1}{|G|}\sum_{g,g^\prime\in G}p_G(g)(\mc{V}^{g^\prime}_B\circ\mc{V}^g_B\circ\mc{N}_{{B'}\to B})(\Phi_{R{B'}})\)\notag \\
& \qquad -H(\mc{N}_{{B'}\to B}(\Phi_{R{B'}}))\\
&=H\!\!\(\mc{N}_{{B'}\to B}\!\!\(\frac{1}{|G|}\sum_{g^\prime\in G}\mc{U}^{g^\prime}_{B'}\!\!\(\sum_{g\in G}p_G(g)\mc{U}^g_{B'}(\Phi_{R{B'}})\)\)\)\notag \\
& \qquad -H(\mc{N}_{{B'}\to B}(\Phi_{R{B'}}))\\
&=H\(\mc{N}_{{B'}\to B}(\pi_R\otimes\pi_{B'})\)-H(\mc{N}_{{B'}\to B}(\Phi_{R{B'}}))\\
&=H\(\pi_R\)+H\(\mc{N}_{{B'}\to B}(\pi_B)\)-H(\mc{N}_{{B'}\to B}(\Phi_{R{B'}}))\\
&=I(R;B)_{\mc{N}(\Phi)}.
\end{align}
The second equality follows from~\eqref{eq:con-cov-2}. The third equality follows because entropy is invariant with respect to unitary or isometric channels. The first inequality follows from the concavity of entropy. The fourth equality follows from~\eqref{eq:con-cov-2}. The fifth equality follows from Definition~\ref{def:covariant}. The sixth equality follows because entropy is additive for product states. 
Since the above upper bound holds for any $p_G$, it follows that
\begin{equation}\label{eq:cov-con}
\mc{C}\(\mc{N}^\textnormal{cov}_G\)=\max_{p_G}I(G;RB)_{\omega}\leq I(R;B)_{\mc{N}(\Phi)}.
\end{equation} 

To prove the achievability part, we take $p_G$ to be a uniform distribution, i.e., $p_G\sim \frac{1}{|G|}$. Putting $p_G\sim\frac{1}{|G|}$ in \eqref{eq:observe-cov-uni}, we find the following lower bound:
\begin{equation}\label{eq:cov-ach}
\mc{C}\(\mc{N}^\textnormal{cov}_G\) \geq I(G;RB)_\omega = I(R;B)_{\mc{N}(\Phi)}.
\end{equation}
Thus, from \eqref{eq:cov-con} and \eqref{eq:cov-ach} we conclude the statement of the theorem.
\end{proof}

\bigskip 
Now we state two corollaries, which are direct consequences of the above theorem. These corollaries establish the quantum reading capacities for jointly covariant memory cells formed from the erasure channel and depolarizing channel with respect to the Heisenberg--Weyl group $\mathbf{H}$, as discussed below. 

Let us first introduce some basic notations and definitions related to qudit systems. A system represented with a $d$-dimensional Hilbert space is called a qu$d$it system. 

Let $J_{B'}=\{|j\>_{B'}\}_{j\in \{0,\ldots,d-1\}}$ be a computational orthonormal basis of $\mc{H}_{B'}$ such that $\dim(\mc{H}_{B'})=d$. There exists a unitary operator called the \textit{cyclic shift operator} $X(k)$ that acts on the orthonormal states as follows:
\begin{equation}
\forall |j\>_{B'}\in J_{B'}:\ \ X(k)|j\>=|k\oplus j\>,
\end{equation}
where $\oplus$ is a cyclic addition operator, i.e., $k\oplus j:= (k+j)\ \textnormal{mod}\ d$. There also exists another unitary operator called the \textit{phase operator} $Z(l)$ that acts on the qudit computational basis states as
\begin{equation}
\forall |j\>_{B'}\in J_{B'}:\ \ Z(l)|j\>=\exp\(\frac{\iota 2\pi lj}{d}\)|j\>. 
\end{equation}
The $d^2$ operators $\{X(k)Z(l)\}_{k,l\in\{0,\ldots,d-1\}}$ are known as the Heisenberg--Weyl operators. Let $\sigma(k,l):=X(k)Z(l)$. 
The maximally entangled state $\Phi_{R{B'}}$ of qudit systems $R{B'}$ is given as
\begin{equation}
|\Phi\>_{R{B'}}:=\frac{1}{\sqrt{d}}\sum_{j=0}^{d-1}|j\>_R|j\>_{B'},
\end{equation}
and we define 
\begin{equation}
|\Phi^{k,l}\>_{R{B'}}:=(I_R\otimes\sigma^{k,l}_{B'})|\Phi\>_{R{B'}}.
\end{equation}
The $d^2$ states $\{|\Phi^{k,l}\>_{R{B'}}\}_{k,l\in\{0,\ldots,d-1\}}$ form a complete, orthonormal basis:
\begin{align}
\<\Phi^{k_1,l_1}|\Phi^{k_2,l_2}\>&=\delta_{k_1,k_2}\delta_{l_1,l_2},\\
\sum_{k,l=0}^{d-1}|\Phi^{k,l}\>\<\Phi^{k,l}|_{R{B'}}&=I_{R{B'}}.
\end{align}

Let $\mc{W}$ be a discrete set such that $|\mc{W}|=d^2$. There exists a one-to-one mapping $\{(k,l)\}_{k,l\in\{0,d-1\}}\leftrightarrow \{w\}_{w\in\mc{W}}$. For example, we can use the following map: $w=k+d\cdot l$ for $\mc{W}=\{0,\ldots,d^2-1\}$. This allows us to define $\sigma^w:=\sigma(k,l)$ and $\Phi^w_{R{B'}}:=\Phi^{k,l}
_{R{B'}}$.  Let the set of $d^2$ Heisenberg--Weyl operators be denoted as
\begin{equation}\label{eq:HW-op}
\mathbf{H}:=\{ \sigma^w\}_{w\in\mc{W}}=\{X(k)Z(l)\}_{k,l\in\{0,\ldots,d-1\}},
\end{equation}
and we refer to $\mathbf{H}$  as the Heisenberg--Weyl group. 

\begin{definition}[Qudit erasure memory cell]\label{def:qudit-erasure}
The qudit erasure memory cell $\mc{Q}^q_{\mc{X}}=\left\{\mc{Q}^{q,x}_{{B'}\to B}\right\}_{x\in\mc{X}}$, $\vert\mc{X}\vert=d^2$, consists of the following qudit channels:
\begin{equation}
\mc{Q}^{q,x}(\cdot)=\mc{Q}^q(\sigma^x(\cdot)\(\sigma^x\)^\dag) ,
\end{equation}
where $\mc{Q}^q$ is a qudit erasure channel \cite{GBP97}:
\begin{equation}
\mc{Q}^q(\rho_{B'})=(1-q)\rho+q|e\>\<e|
\end{equation}
such that $q\in [0,1]$, $\dim(\mc{H}_{B'})=d$, $|e\>\<e|$ is some state orthogonal to the support of any input state $\rho$, and
$\forall x\in\mc{X}: \sigma^x\in\mathbf{H}$ are the Heisenberg--Weyl operators as given in \eqref{eq:HW-op}. Observe that $\mc{Q}^q_{\mc{X}}$ is jointly covariant with respect to the Heisenberg--Weyl group $\mathbf{H}$ because the qudit erasure channel $\mc{Q}^q$ is covariant with respect to $\mathbf{H}$.
\end{definition}

\begin{definition}[Qudit depolarizing memory cell]\label{def:qudit-dep}
The qudit depolarizing memory cell $\mc{D}^q_{\mc{X}}=\left\{\mc{D}^{q,x}_{{B'}\to B}\right\}_{x\in\mc{X}}$, $|\mc{X}|=d^2$, consists of qudit channels
\begin{equation}
\mc{D}^{q,x}(\cdot)=\mc{D}^q\(\sigma^x(\cdot)\(\sigma^x\)^\dag\)
\end{equation}
where $\mc{D}^q$ is a qudit depolarizing channel:
\begin{equation}
\mc{D}^q(\rho)=(1-q)\rho+q\pi,
\end{equation}
where $q\in[0,\frac{d^2}{d^2-1}]$, $\dim(\mc{H}_{B'})=d$ and 
$\forall x\in\mc{X}: \sigma^x\in\mathbf{H}$
are the Heisenberg--Weyl operators as given in \eqref{eq:HW-op}.  
Observe that $\mc{D}^q_{\mc{X}}$ is jointly covariant with respect to the Heisenberg--Weyl group $\mathbf{H}$ because the qudit depolarizing channel $\mc{D}^q$ is covariant with respect to $\mathbf{H}$.
\end{definition}

As a consequence of Theorem~\ref{thm:covariant-to-EA-cap}, we immediately find the quantum reading capacities of the above memory cells:

\begin{corollary}
The quantum reading capacity $\mc{C}(\mc{Q}^q_{\mc{X}})$ of the qudit erasure memory cell $\mc{Q}^q_{\mc{X}}$ (Definition~\ref{def:qudit-erasure}) is equal to the entanglement-assisted classical capacity of the erasure channel $\mc{Q}^q$ \cite{BSST99}:
\begin{equation}
\mc{C}(\mc{Q}^q_{\mc{X}})=2(1-q)\log_2d.
\end{equation}
\end{corollary}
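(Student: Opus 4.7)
The plan is to deduce this corollary as an immediate consequence of Theorem~\ref{thm:covariant-to-EA-cap} together with a direct entropy calculation on the Choi state of the qudit erasure channel. Since the qudit erasure memory cell $\mc{Q}^q_{\mc{X}}$ is jointly covariant with respect to the Heisenberg--Weyl group $\mathbf{H}$ (as noted in Definition~\ref{def:qudit-erasure}), Theorem~\ref{thm:covariant-to-EA-cap} applies and yields
\begin{equation}
\mc{C}(\mc{Q}^q_{\mc{X}}) = I(R;B)_{\mc{Q}^q(\Phi)},
\end{equation}
so the whole task reduces to evaluating the quantum mutual information of the Choi state $\omega_{RB}:=\mc{Q}^q_{A\to B}(\Phi_{RA})$.

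Next, I would write down the Choi state explicitly. Because the erasure flag $|e\>\<e|$ is supported on a subspace orthogonal to the original $d$-dimensional output subspace, one has the orthogonal decomposition
\begin{equation}
\omega_{RB} = (1-q)\,\Phi_{RB} \,+\, q\,\pi_R\otimes|e\>\<e|_B,
\end{equation}
where I have used $\Tr_A[\Phi_{RA}]=\pi_R$. From this decomposition, the three entropies can be read off directly: $H(R)_\omega = \log_2 d$ (since the reduced state on $R$ is $\pi_R$); $H(B)_\omega = (1-q)\log_2 d + h_2(q)$ by computing the spectrum of the reduced state $(1-q)\pi_B + q|e\>\<e|$; and $H(RB)_\omega = h_2(q) + q\log_2 d$ using the orthogonality of the two components (eigenvalue $1-q$ with multiplicity one, and eigenvalue $q/d$ with multiplicity $d$). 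Here $h_2(q):=-q\log_2 q-(1-q)\log_2(1-q)$ is the binary entropy.

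Assembling these three contributions gives
\begin{equation}
I(R;B)_\omega = \log_2 d + (1-q)\log_2 d + h_2(q) - h_2(q) - q\log_2 d = 2(1-q)\log_2 d,
\end{equation}
which is exactly the stated capacity and agrees with the known entanglement-assisted classical capacity of the qudit erasure channel~\cite{BSST99}. There is no serious obstacle in this proof: the only mildly delicate point is ensuring that the erasure flag subspace is kept orthogonal to the input subspace when computing the joint spectrum of $\omega_{RB}$, since otherwise one would miscount eigenvalues and obtain an incorrect value for $H(RB)$. Once this orthogonality is respected, the remaining steps are routine entropy calculations.
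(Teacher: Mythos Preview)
Your proof is correct and follows the same route as the paper: apply Theorem~\ref{thm:covariant-to-EA-cap} to the jointly covariant memory cell $\mc{Q}^q_{\mc{X}}$ to reduce the reading capacity to $I(R;B)_{\mc{Q}^q(\Phi)}$, then identify this with the entanglement-assisted classical capacity of $\mc{Q}^q$. The only difference is cosmetic: the paper simply cites \cite{BSST99} for the value $2(1-q)\log_2 d$, whereas you supply the explicit entropy calculation on the Choi state, which is a nice self-contained addition and is carried out correctly.
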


\begin{corollary}
The quantum reading capacity $\mc{C}(\mc{D}^q_{\mc{X}})$ of the qudit depolarizing memory cell $\mc{D}^q_{\mc{X}}$ (Definition~\ref{def:qudit-dep}) is equal to the entanglement-assisted classical capacity of the depolarizing channel $\mc{D}^q$ \cite{BSST99}:
\begin{multline}
\mc{C}(\mc{D}^q_{\mc{X}})=2\log_2d+\(1-q+\frac{q}{d^2}\)\log_2\!\(1-q+\frac{q}{d^2}\)\\
+(d^2-1)\frac{q}{d^2}\log_2\!\(\frac{q}{d^2}\).
\end{multline}
\end{corollary}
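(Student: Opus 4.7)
The plan is to prove this corollary by a direct parallel to the immediately preceding erasure corollary: combine Theorem~\ref{thm:covariant-to-EA-cap} with the known closed-form entanglement-assisted classical capacity of the qudit depolarizing channel from~\cite{BSST99}, so essentially no new ideas beyond what is already assembled in this section are needed.

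First I would verify that $\mc{D}^q_\mc{X}$ is precisely of the form treated by Definition~\ref{def:N-G-cell}, with $G = \mathbf{H}$ (the Heisenberg--Weyl group) and $\mc{N} = \mc{D}^q$. By Definition~\ref{def:qudit-dep}, every element of the memory cell has the form $\mc{D}^{q,x}(\cdot) = \mc{D}^q(\sigma^x(\cdot)(\sigma^x)^\dagger) = (\mc{D}^q \circ \mc{U}_A^{\sigma^x})(\cdot)$ with $\sigma^x \in \mathbf{H}$, and Definition~\ref{def:qudit-dep} explicitly records that the depolarizing channel $\mc{D}^q$ is covariant under $\mathbf{H}$. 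Since $\mathbf{H}$ realizes a unitary one-design on the input (the twirl $\tfrac{1}{d^2}\sum_{k,l}\sigma^{k,l}\rho(\sigma^{k,l})^\dagger = \pi$), this covariance satisfies Definition~\ref{def:covariant}, so $\mc{D}^q_\mc{X}$ is indeed the jointly covariant memory cell $\mc{D}^{q,\textnormal{cov}}_{\mathbf{H}}$.

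Applying Theorem~\ref{thm:covariant-to-EA-cap} then immediately gives
\begin{equation}
\mc{C}(\mc{D}^q_\mc{X}) = I(R;B)_{\mc{D}^q(\Phi)},
\end{equation}
reducing the problem to computing the entanglement-assisted classical capacity of a single qudit depolarizing channel. Writing $\omega_{RB} := (\id_R \otimes \mc{D}^q_{A\to B})(\Phi_{RA})$ as a convex combination of $\Phi_{RB}$ and the maximally mixed state $\pi_{RB}$, the reduced states on $R$ and $B$ are both maximally mixed (so $H(R)_\omega = H(B)_\omega = \log_2 d$), and $\omega_{RB}$ is diagonal in a basis that contains the maximally entangled vector, with eigenvalues easy to read off. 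Substituting into $I(R;B) = H(R) + H(B) - H(RB)$ and regrouping then yields the stated closed-form expression.

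The main obstacle, such as it is, is essentially bookkeeping: matching Definition~\ref{def:qudit-dep} against Definition~\ref{def:N-G-cell}, and carefully tracking eigenvalue multiplicities in the spectral decomposition of $\omega_{RB}$ so that the final formula appears in the rearranged form displayed in the corollary. Since this closed-form evaluation is the one already carried out in \cite{BSST99}, it can simply be cited rather than redone, so no genuinely difficult step is anticipated.
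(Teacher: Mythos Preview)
Your proposal is correct and matches the paper's approach: the paper states this corollary (and the preceding erasure one) as a direct consequence of Theorem~\ref{thm:covariant-to-EA-cap}, with the closed-form value of $I(R;B)_{\mc{D}^q(\Phi)}$ imported from \cite{BSST99}. Your extra verification that $\mc{D}^q_{\mc{X}}$ fits Definition~\ref{def:N-G-cell} via the Heisenberg--Weyl one-design is exactly the check the paper leaves implicit.
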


\subsection{A thermal memory cell}

Let us consider an example of a thermal memory cell $\mc{E}_{\mc{X},\eta} = \{ \mc{E}^{x,\eta}\}_x$, which is an environment-parametrized memory cell consisting of thermal channels $\mc{E}^{x,\eta}$ with known transmissivity parameter $\eta\in[0,1]$ and unknown excess noise $x$ \cite{TW16}. Let $\hat{a},\hat{b},\hat{e},\hat{e}^\prime$ be the respective field-mode annihilation operators for Bob's input, Bob's output, the environment's input, and the environment's output of these channels. The interaction channel in this case is a fixed unitary $U_{{B'}E\to BE^\prime}$ corresponding to a beamsplitter interaction, defined from the following Heisenberg input-output relations:
\begin{align}
\hat{b}&=\sqrt{\eta}\hat{a}+\sqrt{1-\eta}\hat{e},\\
\hat{e}^\prime&=-\sqrt{1-\eta}\hat{a}+\sqrt{\eta}\hat{e}.
\end{align}
The environmental mode $\hat{e}$ of a thermal channel $\mc{E}^{x,\eta}$ is prepared in a thermal state $\theta^x:=\theta(N_B=x)$ of mean photon number $N_B\geq 0$: 
\begin{equation}\label{eq:thermal-state}
\theta(N_B):=\frac{1}{N_B+1}\sum_{k=0}^\infty\(\frac{N_B}{N_B+1}\)^k|k\>\<k|,
\end{equation}
where $\{|k\>\}_{k\in\mathbb{N}}$ is the orthonormal, photonic number-state basis. Parameter $x$ is the excess noise of the thermal channel $\mc{E}^{x,\eta}$. It can be noted that for $x=0$, $\theta^x$ reduces to a vacuum state and the channel $\mc{E}^{x,\eta}$ is called the pure-loss channel.  

\begin{proposition}
The quantum reading capacity $\mc{C}(\mc{E}_{\mc{X},\eta})$ of the thermal memory cell $\mc{E}_{\mc{X},\eta} = \{ \mc{E}^{x,\eta}\}_x$ (as described above) is equal to
\begin{equation}
\mc{C}(\mc{E}_{\mc{X},\eta}) = \sup_{p_X}
\left[H(\overline{\theta}) - \int dx \ p_X(x) H(\theta^x)\right],
\end{equation}
where $p_X$ is a probability distribution for the parameter $x$ and $\overline{\theta} =
\int dx \ p_X(x) \theta^x$.
\end{proposition}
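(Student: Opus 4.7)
The plan is to establish the proposition via matching upper and lower bounds on $\mc{C}(\mc{E}_{\mc{X},\eta})$.

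For the upper bound, I first observe that $\mc{E}_{\mc{X},\eta}$ is an environment-parametrized memory cell in the sense of Definition~\ref{def:env-cell}: the fixed interaction channel is $\mc{F}_{AE\to B}(\cdot) = \Tr_{E'}\{U_{AE\to BE'}(\cdot) U_{AE\to BE'}^\dagger\}$, determined entirely by the known transmissivity $\eta$, and the environment state of $\mc{E}^{x,\eta}$ is $\theta^x$. Theorem~\ref{thm:env-cell-qrc} then yields
\begin{equation*}
\mc{C}(\mc{E}_{\mc{X},\eta}) \leq \max_{p_X} I(X;E)_\theta = \max_{p_X}\left[H(\overline{\theta})-\int dx\, p_X(x)\, H(\theta^x)\right],
\end{equation*}
where the second equality is the standard Holevo-quantity expansion of $I(X;E)_\theta$ for the c-q state in~\eqref{eq:cq-theta}.

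For the matching lower bound I propose a non-adaptive protocol (restricting to $\eta\in[0,1)$; the case $\eta=1$ corresponds to the identity channel, whose actual reading capacity is zero, so the equality should be read with this edge case understood). On each channel call the reader inputs the vacuum state $|0\>_A$. From the Heisenberg input-output relation $\hat{b}=\sqrt{\eta}\hat{a}+\sqrt{1-\eta}\hat{e}$, with $\hat{a}$ in vacuum and $\hat{e}$ in $\theta^x$, the output mode is in the thermal state $\theta((1-\eta)x)$. The reading task then reduces to decoding a uniformly random classical message from the i.i.d.\ classical-quantum ensemble $\{p_X(x),\theta((1-\eta)x)\}_x$, and the Holevo--Schumacher--Westmoreland theorem gives the achievable rate $\max_{p_X}\left[H(\int dx\, p_X(x)\theta((1-\eta)x))-\int dx\, p_X(x) H(\theta((1-\eta)x))\right]$.

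To close the gap, I perform the bijective change of variable $y=(1-\eta)x$ on $[0,\infty)$ (valid since $\eta<1$) and let $q_Y(y):=p_X(y/(1-\eta))/(1-\eta)$ be the corresponding pushforward density. The rate expression above then coincides with $\max_{q_Y}[H(\int dy\, q_Y(y)\theta(y))-\int dy\, q_Y(y) H(\theta(y))]$, and since $p_X\mapsto q_Y$ is a bijection on probability densities over $[0,\infty)$, this equals $\max_{p_X} I(X;E)_\theta$, exactly matching the upper bound. The one nontrivial technical subtlety is justifying HSW achievability in this infinite-dimensional setting with unbounded mean photon number, but this is standard for thermal-state ensembles; the remaining steps are essentially a reparameterization.
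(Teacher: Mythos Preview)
Your upper bound is correct and matches the paper's use of Theorem~\ref{thm:env-cell-qrc}. The gap is in your achievability argument.

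With vacuum input the output of $\mc{E}^{x,\eta}$ is indeed $\theta^{(1-\eta)x}$, so for a fixed $p_X$ on $\mc{X}$ you achieve the Holevo quantity $\chi\bigl(\{p_X(x),\theta^{(1-\eta)x}\}_{x\in\mc{X}}\bigr)$. Your reparameterisation $y=(1-\eta)x$ then turns this into $\chi\bigl(\{q_Y(y),\theta^{y}\}_{y\in(1-\eta)\mc{X}}\bigr)$, i.e.\ the Holevo quantity over the \emph{shrunken} index set $(1-\eta)\mc{X}$. The two suprema coincide only if $\mc{X}$ is invariant under multiplication by $1-\eta$, essentially only if $\mc{X}=[0,\infty)$. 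For any bounded or finite $\mc{X}$ (the generic case; the paper's proof itself writes $\sum_{x\in\mc{X}}$), your achievable rate is strictly below the target: the map $\theta^x\mapsto\theta^{(1-\eta)x}$ is precisely the action of a pure-loss channel of transmissivity $1-\eta$, so by data processing $\chi\bigl(\{p_X,\theta^{(1-\eta)x}\}\bigr)<\chi\bigl(\{p_X,\theta^x\}\bigr)$ whenever $\eta>0$ and the ensemble is nontrivial. The encoder cannot compensate by rescaling $x$, because the memory cell only contains the channels indexed by the given~$\mc{X}$.

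The paper closes this gap with an entangled, high-energy probe: one half of a two-mode squeezed vacuum of mean photon number $N_S$ is sent in, and an explicit symplectic (two-mode squeezer) post-processing on $RB$ is exhibited under which one output mode converges to the environment state $\theta^x$ itself as $N_S\to\infty$, independently of $\eta$. Consequently $I(X;RB)_{\omega^\eta(N_S)}\to I(X;E)_\theta$ for \emph{every} $p_X$ on \emph{every} $\mc{X}$, matching the converse exactly. The missing idea in your approach is that recovering the full environment Holevo information through a beamsplitter with $\eta>0$ requires an entangled probe whose energy is driven to infinity; a vacuum input simply cannot do this for a general thermal memory cell.
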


\begin{proof}
We begin by proving the achievability part, which corresponds to the inequality
\begin{equation}
\mc{C}(\mc{E}_{\mc{X},\eta}) \geq I(X;E)_\theta,
\end{equation}
where $\theta_{XE}=\int dx \ p_X(x)|x\>\<x|_X\otimes\theta^x_E$. The main idea for the achievability part builds on the results of \cite[Eqns.~(38)--(48)]{TW16}.

The two-mode squeezed vacuum state is equivalent to a purification of the thermal state in \eqref{eq:thermal-state} and is defined as
\begin{equation}
\left|\phi^\textnormal{TMS}(N_S)\right>_{R{B'}}:=\frac{1}{\sqrt{N_S+1}}\sum_{k=0}^\infty\[\frac{N_S}{N_S+1}\]^{\frac{k}{2}}|k\>_R|k\>_{B'}.
\end{equation}
When sending the ${B'}$ system of this state through the channel $\mc{E}^{x,\eta}_{ {B'}\to B}$, the output state is as follows:
\begin{align}
 \omega^{x,\eta}_{RB}(N_S) & :=
 (\id_R\otimes\mc{E}^{x,\eta}_{ {B'}\to B})\(\phi^\textnormal{TMS}_{R{B'}}(N_S)\)\label{eq:omega-x}\\
&=\Tr_{E^\prime}\{\mathcal{U}_{{B'}E\to BE^\prime}\(\phi^\textnormal{TMS}_{R{B'}}(N_S)\otimes\theta^x_E\)\}\label{eq:theta-x-thermal},
\end{align}
where $\mathcal{U}_{{B'}E\to BE^\prime}(\cdot) := U_{{B'}E\to BE^\prime} (\cdot) U_{{B'}E\to BE^\prime}^\dag$,
and the average output state is as follows, when the channel $\mc{E}^{x,\eta}_{ {B'}\to B}$ being applied is chosen with probability
$p_X(x)$:
\begin{align}
& \sum_{x\in\mc{X}}p_X(x)\omega^{x,\eta}_{RB}(N_S)\notag \\
&=\sum_{x\in\mc{X}}p_X(x)\Tr_{E^\prime}\left\{\mathcal{U}_{{B'}E\to BE^\prime}\(\phi_{R{B'}}(N_S)\otimes\theta^x_E\)\right\}\\
&=\Tr_{E^\prime}\left\{\mathcal{U}_{{B'}E\to BE^\prime}\(\phi_{R{B'}}(N_S)\otimes\sum_{x\in\mc{X}}p_X(x)\theta^x_E\)\right\}.
\end{align}
Let us define the following classical--quantum state:
\begin{equation}
\omega^{\eta}_{XRB}(N_S)=\sum_{x\in\mc{X}}p_X(x)|x\>\<x|_X\otimes\omega^{x,\eta}_{RB},
\end{equation}
and consider that
\begin{multline}
I(X;RB)_{\omega^{\eta}(N_S)}=\\
\sum_{x\in\mc{X}}p_X(x)D\!\(\omega^{x,\eta}_{RB}(N_S)\middle\Vert\sum_{x\in\mc{X}}p_X(x)\omega^{x,\eta}_{RB}(N_S)\) .
\end{multline}

The Wigner characteristic function covariance matrix \cite{adesso14} for $\omega^{x,\eta}_{RB}(N_S)$ in \eqref{eq:omega-x} is as follows:
\begin{equation}
V_{\omega^{x,\eta}(N_S)}=\begin{bmatrix}
a & c & 0 & 0 \\
c & b & 0 & 0 \\
0 & 0 & a & -c \\
0 & 0 & -c & b \\
\end{bmatrix},
\end{equation}
where
\begin{align}
a & =\eta N_S+\(1-\eta\)x+\frac{1}{2},\\
b& =N_S+\frac{1}{2},\\
c&=\sqrt{\eta N_S(N_S+1)} . 
\end{align}

Let us consider the following symplectic transformation \cite{TW16}:
\begin{equation}
S^{\eta}(N_S)=\begin{bmatrix}
\gamma_+ & -\gamma_- & 0 & 0 \\
-\gamma_- & \gamma_+ & 0 & 0 \\
0 & 0 & \gamma_+ & \gamma_- \\
0 & 0 & \gamma_- & \gamma_+ \\
\end{bmatrix} ,
\end{equation}
where
\begin{align}
\gamma_+=\sqrt{\frac{1+N_S}{1+(1-\eta)N_S}},\qquad
\gamma_-=\sqrt{\frac{\eta N_S}{1+(1-\eta)N_S}}\ .
\end{align}

The action of the symplectic matrix $S^{\eta}(N_S)$ on the covariance matrix $V_{\omega^{x,\eta}(N_S)}$ gives
\begin{align}
\hat{V}_{\omega^{x,\eta}(N_S)}& :=S^{\eta}(N_S)V_{\omega^{x,\eta}(N_S)}\(S^{\eta}(N_S)\)^\textnormal{T}\\
& =\begin{bmatrix}
a_s & -c_s & 0 & 0 \\
-c_s & b_s & 0 & 0 \\
0 & 0 & a_s & c_s \\
0 & 0 & c_s & b_s \\
\end{bmatrix},
\end{align}
where
\begin{align}
a_s&=x+\frac{1}{2}+\mc{O}\(\frac{1}{N_S}\),\\
b_s&=\(1-\eta\)N_S+\eta x+\frac{1}{2}+\mc{O}\(\frac{1}{N_S}\),\\
c_s&=\sqrt{\eta}x+\mc{O}\(\frac{1}{N_S}\)\ . 
\end{align}
Thus, by applying this transformation to $\omega^{x,\eta}(N_S)$ and tracing out the second mode, we are left with a state that becomes indistinguishable from a thermal state of mean photon number $x$ in the limit as $N_S \to \infty$. Note that this occurs independent of the value of the transmissivity~$\eta$.

The symplectic transformation $S^{\eta}(N_S)$ can be realized by a two-mode squeezer, which corresponds to a unitary transformation acting on the tensor-product Hilbert space. Letting the unitary transformation be of the form $W_{RB\to EB}$, then $\hat{V}_{\omega^{x,\eta}(N_S)}$ represents the covariance matrix of the state $\omega^{x,\eta}_{EB}(N_S)$. 

We use the formula for fidelity between two thermal states \cite[Equation 34]{TW16} and the relation between trace norm and fidelity \cite[Theorem 9.3.1]{W15book} to conclude that
\begin{multline}
\lim_{N_S\to\infty}\left\Vert \omega^{x,\eta}_E(N_S)-\theta^x_E\right\Vert_1\\
\leq \lim_{N_S\to\infty}\sqrt{1-F\(\omega^{x}_E(N_S),\theta^x_E\)}=0.
\end{multline}
From the convexity of trace norm, we find that
\begin{multline}
\left\Vert \sum_{x\in\mc{X}}p_X(x)\omega^x_E(N_S)-\sum_{x\in\mc{X}}p_X(x)\theta^x_E\right\Vert_1
\\
\leq \sum_{x\in\mc{X}}p_X(x)\left\Vert \omega^{x}_E(N_S)-\theta^x_E\right\Vert_1,
\end{multline}
which in turn implies that
\begin{align}
 \lim_{N_S\to\infty}\left\Vert \sum_{x\in\mc{X}}p_X(x)\omega^x_E(N_S)-\sum_{x\in\mc{X}}p_X(x)\theta^x_E\right\Vert_1=0.
\end{align}

Invoking the result of \cite[Equation 28]{TW16} and the lower semi-continuity of relative entropy, we get that
\begin{multline}
\lim_{N_S\to\infty} D\(\omega^{x,\eta}_{RB}(N_S)\left\Vert\sum_{x\in\mc{X}}p_X(x)\omega^{x,\eta}_{RB}(N_S)\)\right.\\
=D\(\theta^x_{E}\left\Vert \sum_{x\in\mc{X}}p_X(x)\theta^x_E\)\right..
\end{multline}

Thus, from the above relations, we find that
\begin{equation}
\lim_{N_S\to \infty}I(X;RB)_{\omega^{\eta}(N_S)}=I(X;E)_{\theta},
\end{equation}
where 
\begin{equation}
\theta_{XE}=\sum_{x\in\mc{X}}p_X(x)|x\>\<x|_X\otimes\theta^x_E ,
\end{equation}
for $\theta^x_E$ defined in Equation~\eqref{eq:theta-x-thermal}. This shows that $I(X;E)_{\theta}$ is an achievable rate for any $p_X$. 

The converse part of the proof, which corresponds to the inequality 
\begin{equation}
\mc{C}(\mc{E}_{\mc{X},\eta})\leq \max_{p_X}I(X;E)_\theta,
\end{equation}
follows directly from Theorem~\ref{thm:env-cell-qrc}.
\end{proof}

\section{Zero-error quantum reading capacity}\label{sec:zero-error}

In an $(n,R,\varepsilon)$ quantum reading protocol (Definition~\ref{def:QR}) for a memory cell $\mc{S}_{X}=\{\mc{M}^x_{{B'}\to B}\}_{x\in\mc{X}}$, one can demand the error probability to vanish, i.e., $\varepsilon=0$. In this section, we define zero-error quantum reading protocols and the zero-error quantum reading capacity for any memory cell. We provide an explicit example of a memory cell for which a quantum reading protocol using an adaptive strategy has a clear advantage over a quantum reading protocol that uses a non-adaptive strategy. 

\begin{definition}[Zero-error quantum reading protocol]
A zero-error quantum reading protocol
of a memory cell $\mc{S}_{\mc{X}}$ is a particular 
$(n,R,\varepsilon)$ quantum reading protocol for which $\varepsilon=0$.
\end{definition}

\begin{definition}[Zero-error quantum reading capacity]
The zero-error quantum reading capacity
$\mc{Z}(\mc{S}_{\mc{X}})$ of a memory cell $\mc{S}_{\mc{X}}$ is defined as the largest rate $R$ such that there exists a zero-error reading protocol. 
\end{definition}

A zero-error non-adaptive quantum reading protocol of a memory cell is a special case of a zero-error quantum reading protocol in which the reader uses a non-adaptive strategy to decode the message. 
 
\subsection{Advantage of an adaptive strategy over a non-adaptive strategy}

In this section, we employ the main example from
\cite{HHLW10} to  illustrate the advantage of an adaptive zero-error quantum reading protocol over a non-adaptive zero-error quantum reading protocol. 

Let us consider a memory cell $\mc{B}_{\mc{X}}=\{\mc{M}_{{B'}\to B}^x\}_{x\in\mc{X}}$, $\mc{X}=\{1,2\}$, consisting of the following quantum channels that map two qubits to a single qubit,  acting as
\begin{equation}
\mc{M}^x(\cdot)=\sum_{j=1}^5A^x_j(\cdot)\(A^x_j\)^\dag, \ x\in\mc{X} ,
\end{equation}
where
\begin{align}
A^1_1 & =|0\>\<00|,
  & A^1_2 & =|0\>\<01|,\\
     A^1_3 & =|0\>\<10|,
    &  A^1_4 & =\frac{1}{\sqrt{2}}|0\>\<11|,\\
      A^1_5 & =\frac{1}{\sqrt{2}}|1\>\<11|, 
& A^2_1 & =|+\>\<00|,\\
A^2_2 & =|+\>\<01|,
 & A^2_3 & =|1\>\<1+|,\\ 
 A^2_4 & =\frac{1}{\sqrt{2}}|0\>\<1-|,
  & A^2_5 & =\frac{1}{\sqrt{2}}|1\>\<1-|,
\end{align}
and the standard bases for the channel inputs and outputs are $\{|00\>,|01\>,|10\>,|11\>\}$ and $\{|0\>,|1\>\}$, respectively. 
 
It follows from \cite{HHLW10} that it is possible to discriminate perfectly these two channels using an adaptive strategy that makes two calls to the unknown channel
$\mc{M}^x$. This implies that the encoder can encode  two classical messages (one bit) into two uses of the quantum channels from $\mc{B}_{\mc{X}}$ such that Bob can perfectly read the message, i.e., with zero error. Thus, we can conclude that the zero-error quantum reading capacity
of $\mc{B}_{\mc{X}}$
is  bounded
from below by $\frac{1}{2}$ (one bit per two channel uses).

Closely following the arguments of \cite[Section 4]{HHLW10}, we can show that non-adaptive strategies can never realize perfect discrimination of the sequences $\mc{M}_{{B'}^n\to B^n}^{x^n}$ and $\mc{M}_{{B'}^n\to B^n}^{y^n}$, for any finite number $n$ of channel uses if $x^n\neq y^n$.  Equivalently, for $x^n\neq y^n$
\begin{equation}
 \Vert \mc{M}_{{B'}^n\to B^n}^{x^n}-\mc{M}_{{B'}^n\to B^n}^{y^n}\Vert_{\diamond}< 2 \label{eq:ch-discrimination}
\end{equation} 
for all $ n\in\mathbb{N}$,
where $\Vert \cdot\Vert_{\diamond}$ is the diamond norm (defined in \cite[Equation 1]{HHLW10}). Thus, the zero-error non-adaptive quantum reading capacity of $\mc{B}_{\mc{X}}$ is equal to zero. 

To prove the above claim, we proceed with a proof by contradiction along the lines of that given in \cite[Section 4]{HHLW10}. We need to show that for any finite $n\in\mathbb{N}$, if $x^n\neq y^n$, then there does not exist any state $\sigma_{R{B'}^n}$ such that the two sequences $\mc{M}_{{B'}^n\to B^n}^{x^n}$ and $\mc{M}_{{B'}^n\to B^n}^{y^n}$ can be perfectly discriminated. Note that perfect discrimination is possible if and only if  there exists a state $\sigma_{R{B'}^n}$ such that 
\begin{equation}
\label{eq:ch-dis}
\Tr\left\{ \mc{M}_{{B'}^n\to B^n}^{x^n}(\sigma_{R{B'}^n})\mc{M}_{{B'}^n\to B^n}^{y^n} (\sigma_{R{B'}^n})\right\}=0,
\end{equation}
which is the same as the condition
\begin{equation}
 \mc{M}_{{B'}^n\to B^n}^{x^n}(\sigma_{R{B'}^n})\mc{M}_{{B'}^n\to B^n}^{y^n} (\sigma_{R{B'}^n})=0,
\end{equation}
and in turn the same as
\begin{equation}
\Tr\left\{ \sqrt{\mc{M}_{{B'}^n\to B^n}^{x^n}(\sigma_{R{B'}^n})}\sqrt{\mc{M}_{{B'}^n\to B^n}^{y^n} (\sigma_{R{B'}^n})}\right\}=0.
\end{equation}
Let us suppose that there exists a state $\sigma_{R{B'}^n}$ such that \eqref{eq:ch-dis} holds. The data processing inequality for the quantity above \cite{P85,P86} then implies that \eqref{eq:ch-dis} holds  for some pure state $\psi_{R{B'}^n}$. Then, by carefully following the steps from \cite[Section 4]{HHLW10}, \eqref{eq:ch-dis} implies that 
for any set of complex coefficients $\{\alpha^{x,y}_{j,k}\in\mathbb{C}:1\leq j,k\leq 5,\ x,y\in\mc{X}\}$  
\begin{multline}
\langle\psi|_{R{B'}^n} \Big[ I_R \otimes \sum_{1\leq j,k\leq 5\,:\,i\in[n]}\alpha^{x_1,y_1}_{j_1,k_1}\cdots\ \alpha^{x_n,y_n}_{j_n,k_n}\({B'}^{y_1}_{j_1}\)^\dag \times \\
{B'}^{x_1}_{k_1}\otimes\cdots\otimes\({B'}^{y_n}_{j_n}\)^\dag {B'}^{x_n}_{k_n}\Big]|\psi\rangle_{R{B'}^n}= 0.\label{eq:for-a-contra}
\end{multline}
Let us choose the coefficients $\{\alpha^{x,y}_{j,k}\in\mathbb{C}:1\leq j,k\leq 5,\ x,y\in\mc{X}\}$ as follows: 
\begin{align}
\alpha^{x,y}_{1,1} & =\alpha^{x,y}_{2,2}=\sqrt{2}, \\\alpha^{x,y}_{3,5} & =\alpha^{x,y}_{4,3}=1, \\
\alpha^{x,y}_{4,4} & =-2\sqrt{2}, \\
\text{ otherwise }   \alpha^{x,y}_{j,k} & =0,
\end{align}
for\ $x\neq y$
and 
$\alpha^{x,y}_{j,k}=\delta_{j,k}$
for $x=y$,
where, if $j=k$ then $\delta_{j,k}=1$, else $\delta_{j,k}=0$.
For the above choice of the coefficients, it follows  that 
\begin{multline}
I_R \otimes \sum_{1\leq j,k\leq 5\,:\,i\in[n]}\alpha^{x_1,y_1}_{j_1,k_1}\cdots\ \alpha^{x_n,y_n}_{j_n,k_n}\(A^{y_1}_{j_1}\)^\dag \times \\
A^{x_1}_{k_1}\otimes\cdots\otimes\(A^{y_n}_{j_n}\)^\dag A^{x_n}_{k_n}\\
=I_R \otimes P^{x_1,y_1}\otimes\cdots\otimes P^{x_n,y_n}
\end{multline}  
where  for  $i\in[n]$
\begin{equation}
\ P^{x_i,y_i}= \left\{ 
\begin{tabular}{c c}
$P>0$, & $x_i\neq y_i$\\
$I>0$, &  otherwise,
\end{tabular} 
\right.
\end{equation}
and $P=|00\>\<00|+|01\>\<01|+|11\>\<11|+|1-\>\<1-|$. Observe that the operator
$I_R \otimes P^{x_1,y_1}\otimes\cdots\otimes P^{x_n,y_n}$ is positive definite.
This means that there cannot exist any state that satisfies \eqref{eq:for-a-contra}, and as a consequence \eqref{eq:ch-dis}, and this concludes the proof.

From the above discussion, we conclude that the zero-error quantum reading capacity of the memory cell $\mc{B}_{\mc{X}}$ is bounded from below by $\frac{1}{2}$ whereas the zero-error non-adaptive quantum reading capacity is equal to zero.

\section{Conclusion}
\label{sec:conclusion}

In this paper, we have provided the most general and natural definitions for quantum reading protocols and quantum reading capacities. We have introduced environment-parametrized memory cells for quantum reading, which are sets of quantum channels obeying certain symmetries. We have provided upper bounds on the quantum reading capacity and the non-adaptive quantum reading capacity of an arbitrary memory cell. We have also provided strong converse  and second-order  bounds on quantum reading capacities of environment-parametrized memory cells. We have calculated quantum reading capacities for a thermal memory cell, a qudit erasure memory cell, and a qudit depolarizing memory cell. Finally, we have shown the advantage of an adaptive strategy over a non-adaptive strategy in the context of zero-error quantum reading capacity of a memory cell.

We note that it is possible to use the methods developed here to obtain bounds on the quantum reading capacities of memory cells based on amplifying bosonic channels, in the same spirit as the results of a thermal memory cell (the argument follows from \cite{TW16}).

A natural question following from the developments in this paper is whether there exists a memory cell for which the quantum reading capacity is larger than what one could achieve by using a non-adaptive strategy. As discussed above, we have provided a positive answer to this question in the setting of zero error. However, the question remains open for the case of Shannon-theoretic capacity (i.e., with arbitrarily small error). We suspect that this question will have a positive answer, and we strongly suspect it will be the case in the setting of non-asymptotic capacity, our latter suspicion being due to the fact that
feedback is known to help in non-asymptotic settings for communication
(see, e.g., \cite{PPV11feedback}). We leave the investigation of this question for future work.


\bigskip
\textbf{Acknowledgements.}
We thank David Ding, Saikat Guha, and Andreas Winter for insightful discussions. We also thank the anonymous referees for feedback that helped to improve our paper. SD acknowledges support from
the LSU Graduate School Economic Development Assistantship.
MMW acknowledges support from the Office of Naval Research and the National Science Foundation.

\bibliographystyle{unsrt}
\bibliography{qr}

\end{document}